\newtheorem{lemma}{Lemma}
\newtheorem{remark}{Remark}
\newtheorem{proof}{Proof}
\begin{document}

\title{Active Terminal Identification, Channel Estimation, and Signal Detection for Grant-Free NOMA-OTFS in LEO Satellite Internet-of-Things}
\author{
	Xingyu Zhou, Keke Ying, Zhen Gao, Yongpeng Wu, Zhenyu Xiao, Symeon Chatzinotas, Jinhong Yuan,~\IEEEmembership{Fellow,~IEEE}, and Bj\"{o}rn Ottersten,~\IEEEmembership{Fellow,~IEEE} 
	\thanks{This paper was presented in part at the 2021 IEEE/CIC International Conference on Communications in China (ICCC) \cite{myself}.
	The work of Z. Gao is supported in part by the Natural Science Foundation of China (NSFC) under Grant 62088101, Grant 61827901, and Grant 62071044; in part by the Beijing Institute of Technology Research Fund Program for Young Scholars under
	Grant XSQD-202121009; and in part by the Ensan Foundation under Grant 2022006.
	The work of Z. Xiao is supported by the Beijing Natural Science Foundation under grant number L212003. \textit{(Corresponding author: Zhen Gao.)}
}
	\thanks{X. Zhou, K. Ying, and Z. Gao are with the School of Information and Electronics, Beijing Institute of Technology, Beijing 100081, China, and also with the Advanced Research Institute of Multidisciplinary Science, Beijing Institute of Technology, Beijing 100081, China (E-mails: \{zhouxingyu21, ykk, gaozhen16\}@bit.edu.cn).}
	\thanks{Y. Wu is with the Department of Electronic Engineering, Shanghai Jiao Tong University, Shanghai 200240, China (E-mail: yongpeng.wu@sjtu.edu.cn).}
	\thanks{Z. Xiao is with the School of Electronic and Information Engineering, Beihang University, Beijing 100191, China (E-mail: xiaozy@buaa.edu.cn).}
	\thanks{S. Chatzinotas and B. Ottersten are with Interdisciplinary Centre for Security, Reliability and Trust, University of Luxembourg, 29 Avenue J.F. Kennedy, Luxembourg City L-1855, Luxembourg (E-mails: \{symeon.chatzinotas, bjorn.ottersten\}@uni.lu).}	
	\thanks{J. Yuan is with the School of Electrical Engineering and Telecommunications, the University of New South Wales, Sydney, NSW 2025, Australia (Email: j.yuan@unsw.edu.au).}
} 
	
%\markboth{IEEE TRANSACTIONS ON XXX,~Vol.~XXX, No.~XX, 2021}
%{Shell \MakeLowercase{\textit{et al.}}:Bare Demo of IEEEtran.cls for IEEE Journals}

\maketitle

\begin{abstract}
	
This paper investigates the massive connectivity of low Earth orbit (LEO) satellite-based Internet-of-Things (IoT) for seamless global coverage.
We propose to integrate the grant-free non-orthogonal multiple access (GF-NOMA)  paradigm  with the emerging orthogonal time frequency space (OTFS) modulation to accommodate the massive IoT access, and mitigate the long round-trip latency and severe Doppler effect of terrestrial–satellite links (TSLs).
On this basis, we put forward a two-stage successive active terminal identification (ATI) and channel estimation (CE) scheme as well as a low-complexity multi-user signal detection (SD) method.
Specifically, at the first stage, the proposed training sequence aided OTFS (TS-OTFS) data frame structure facilitates the joint ATI and coarse CE, whereby both the traffic sparsity of terrestrial IoT terminals and the sparse channel impulse response are leveraged for enhanced performance.
Moreover, based on the single Doppler shift property for each TSL and sparsity of delay-Doppler domain channel, we develop a parametric approach to further refine the CE performance. 
Finally, a least square based parallel time domain SD method is developed to detect the OTFS signals with relatively low complexity.
Simulation results demonstrate the superiority of the proposed methods over the state-of-the-art solutions in terms of ATI, CE, and SD performance confronted with the long round-trip latency and severe Doppler effect.
    
\end{abstract}

\begin{IEEEkeywords}
Internet of Things (IoT), low Earth orbit (LEO) satellite, orthogonal time frequency space (OTFS), grant-free non-orthogonal multiple access (GF-NOMA).
\end{IEEEkeywords}

\IEEEpeerreviewmaketitle

\section{Introduction}\label{S1}

\IEEEPARstart{W}{ith} the advent of the 5G era, Internet of Things (IoT) based on terrestrial  cellular networks has developed rapidly and has been widely used in various aspects of human life \cite{5g-IoT}.
In the coming beyond 5G and even 6G, IoT is expected to revolutionize the way we live and work, by means of a wealth of new services based on the seamless interactions of massive heterogeneous terminals \cite{6g-IoT}. 
However, in many application scenarios, IoT terminals are widely distributed. 
Particularly, a considerable percentage of IoT terminals may be located in remote areas, which indicates that these IoT applications can not be well supported by conventional terrestrial  cellular infrastructures.
In recent years, low Earth orbit (LEO) satellite communication systems have attracted considerable research interest, and dense LEO constellations are expected to complement and extend existing terrestrial communication networks, reaching seamless global coverage.  
In fact, the commercial exploration of LEO constellations dates back to the late $20^{\rm th}$ century, such as Iridium, Globalstar, and Teledesic \cite{Liu.LEO}. Unfortunately, most of these early attempts ended with failure in the context of the underdeveloped vertical applications. 
Nowadays, a booming demand and new space technologies reignite the LEO market, where numerous enterprises, such as Starlink, OneWeb, and Telesat \cite{Liu.LEO}, envisage massive deployment. 
As an indispensable component of the 6G space-air-ground-sea integrated networks, LEO constellations are envisioned to  provide a promising solution to enable wide area IoT services {\cite{IoRT,LEO-IoT,embb}}.

Nevertheless, distinct from the terrestrial communication environment, satellite communications usually suffer from harsh channel conditions such as long round-trip delay, severe Doppler effects, poor link budget, and etc \cite{NB-IoT, Resource Allocation}.
Besides, in sharp contrast to the conventional downlink-dominated human-type communication systems, IoT is mainly driven by the uplink massive machine-type communications (mMTC) with the characteristics of sporadic behavior	
\cite{spoadic traffic}.
Meanwhile, the demands of advanced IoT-enabled applications have shifted from low-rate short packet transmission to more rigorous low-latency, broadband, and reliable information interaction \cite{6g-IoT}.
Consequently, the design of efficient random access (RA) paradigm for massive IoT terminals based on LEO satellite constitutes a challenging problem.

\subsection{Related Work}

\renewcommand{\arraystretch}{1.15}
\begin{table*}[]	
	\centering
	\caption{A brief comparison of the related literature with our work}
	\label{tab:my-table}
	\resizebox{1.8\columnwidth}{!}{%
		\begin{tabular}{|c|c|c|c|ccc|c|}
			\hline
			\multirow{2}{*}{\textbf{Reference}} & \multirow{2}{*}{\textbf{Channel model}}                                                         & \multirow{2}{*}{\textbf{Bandwidth}} & \multirow{2}{*}{\textbf{\begin{tabular}[c]{@{}c@{}}Transmit signal \\ waveform\end{tabular}}} & \multicolumn{3}{c|}{\textbf{\begin{tabular}[c]{@{}c@{}}Signal processing \\ at the receiver\end{tabular}}} & \multirow{2}{*}{\textbf{Algorithm}}                                                                               \\ \cline{5-7}
			&                                                                                                 &                                     &                                                                                               & \multicolumn{1}{c|}{\textbf{ATI}}         & \multicolumn{1}{c|}{\textbf{CE}}         & \textbf{SD}         &                                                                                                                   \\ \hline
			\cite{wang1}                        & \begin{tabular}[c]{@{}c@{}}Frequency selective\\ Rayleigh fading\end{tabular}                   & Broadband                           & OFDM                                                                                          & \multicolumn{1}{c|}{\checkmark}           & \multicolumn{1}{c|}{}                    & \checkmark          & \begin{tabular}[c]{@{}c@{}}Structured iterative \\ support detection\end{tabular}                                 \\ \hline
			\cite{Du2}                          & Rayleigh fading                                                                                 & Narrowband                          & Single-carrier                                                                                & \multicolumn{1}{c|}{\checkmark}           & \multicolumn{1}{c|}{}                    & \checkmark          & \begin{tabular}[c]{@{}c@{}}Block sparse modified \\ subspace pursuit (SP)\end{tabular}                            \\ \hline
			\cite{wang2}                        & \begin{tabular}[c]{@{}c@{}}Frequency selective\\ Rayleigh fading\end{tabular}                   & Broadband                           & OFDM                                                                                          & \multicolumn{1}{c|}{\checkmark}           & \multicolumn{1}{c|}{}                    & \checkmark          & \begin{tabular}[c]{@{}c@{}}Modified orthogonal \\ matching pursuit (OMP)\end{tabular}                             \\ \hline
			\cite{Du1}                          & Rayleigh fading                                                                                 & Narrowband                          & Single-carrier                                                                                & \multicolumn{1}{c|}{\checkmark}           & \multicolumn{1}{c|}{}                    & \checkmark          & \begin{tabular}[c]{@{}c@{}}Prior-information \\ aided adaptive SP\end{tabular}                                    \\ \hline
			\cite{JADD-MAP}                     & Rayleigh fading                                                                                 & Narrowband                          & Single-carrier                                                                                & \multicolumn{1}{c|}{\checkmark}           & \multicolumn{1}{c|}{}                    & \checkmark          & \begin{tabular}[c]{@{}c@{}}Maximum a posteriori \\ probability (MAP)\end{tabular}                                 \\ \hline
			\cite{JADD-AMP}                     & \begin{tabular}[c]{@{}c@{}}Frequency selective\\ Rayleigh fading\end{tabular}                   & Broadband                           & OFDM                                                                                          & \multicolumn{1}{c|}{\checkmark}           & \multicolumn{1}{c|}{}                    & \checkmark          & \begin{tabular}[c]{@{}c@{}}Approximate message passing \\ (AMP) and expectation \\ maximization (EM)\end{tabular} \\ \hline
			\cite{Yikunmei}                     & \begin{tabular}[c]{@{}c@{}}Frequency selective\\ Rayleigh fading\\ (Pre-equalized)\end{tabular} & Broadband                           & OFDM                                                                                          & \multicolumn{1}{c|}{\checkmark}           & \multicolumn{1}{c|}{}                    & \checkmark          & \begin{tabular}[c]{@{}c@{}}Orthogonal AMP with \\ multiple measurement \\ vectors (MMV)\end{tabular}              \\ \hline
			\cite{JAUDCE}                       & Frequency fading                                                                                & Broadband                           & OFDM                                                                                          & \multicolumn{1}{c|}{\checkmark}           & \multicolumn{1}{c|}{\checkmark}          &                     & \begin{tabular}[c]{@{}c@{}}Iterative identified \\ user cancellation\end{tabular}                                 \\ \hline
			\cite{C-RAN}                        & Rayleigh fading                                                                                 & Narrowband                          & Single-carrier                                                                                & \multicolumn{1}{c|}{\checkmark}           & \multicolumn{1}{c|}{\checkmark}          &                     & Modified Bayesian CS                                                                                              \\ \hline
			\cite{AMP}                          & Rayleigh fading                                                                                 & Narrowband                          & Single-carrier                                                                                & \multicolumn{1}{c|}{\checkmark}           & \multicolumn{1}{c|}{\checkmark}          &                     & AMP                                                                                                               \\ \hline
			\cite{KML}                          & \begin{tabular}[c]{@{}c@{}}Frequency selective \\ fading\end{tabular}                           & Broadband                           & OFDM                                                                                          & \multicolumn{1}{c|}{\checkmark}           & \multicolumn{1}{c|}{\checkmark}          &                     & \begin{tabular}[c]{@{}c@{}}Generalized MMV \\ (GMMV)-AMP-EM\end{tabular}                                          \\ \hline
			\cite{LEO.IoT}                      & Land mobile satellite                                                                           & Narrowband                          & Single-carrier                                                                                & \multicolumn{1}{c|}{\checkmark}           & \multicolumn{1}{c|}{\checkmark}          &                     & Bernoulli–Rician MP-EM                                                                                            \\ \hline
			\cite{SWQ.OTFS}                     & Double-dispersive                                                                               & Broadband                           & OTFS                                                                                          & \multicolumn{1}{c|}{}                     & \multicolumn{1}{c|}{\checkmark}          & \checkmark          & \begin{tabular}[c]{@{}c@{}}Three-dimensional \\ simultaneous-OMP\end{tabular}                                     \\ \hline
			\cite{ZS-OTFS-CE}                   & Double-dispersive                                                                               & Broadband                           & OTFS                                                                                          & \multicolumn{1}{c|}{}                     & \multicolumn{1}{c|}{\checkmark}          &                     & EM-variational Bayesian (VB)                                                                                      \\ \hline
			Our work                            & \begin{tabular}[c]{@{}c@{}}TSL channel\\ (Double-dispersive)\end{tabular}                       & Broadband                           & OTFS                                                                                          & \multicolumn{1}{c|}{\checkmark}           & \multicolumn{1}{c|}{\checkmark}          & \checkmark          & \begin{tabular}[c]{@{}c@{}}Two-stage ATI \& CE and\\ LS-based parallel SD\end{tabular}                            \\ \hline
		\end{tabular}%
	}
\end{table*}

%To accommodate the massive connectivity aroused from the exponential growth of IoT terminals, efficient multiple access protocol plays a pivotal role. 

The traditional grant-based RA protocols adopted by terrestrial cellular networks usually suffer from the complicated control signaling exchanges and scheduling for requesting uplink access resources \cite{granted-based,IoRT}. 
In the case of the extremely long terrestrial–satellite link (TSL) and the resulting large round-trip signal propagation delay, this type of solution will further aggravate the access latency. 
To this end, the ALOHA protocols arise as a better option and 
are widely used in existing satellite communications for RA \cite{RA-Sat}.
The original ALOHA protocol allows the terminals to transmit their data packets without any coordination.
To improve the RA throughput, more advanced ALOHA techniques are developed, such as contention resolution diversity ALOHA (CRDSA) \cite{CRDSA} and enhanced spread spectrum ALOHA (E-SSA), and etc. 
Despite the aforementioned efforts, the current ALOHA-based RA protocols mainly depend on orthogonal multiple access (OMA) technique and may suffer from the network congestion when the number of terrestrial IoT terminals becomes massive \cite{RA-Sat}.

Recently, grant-free non-orthogonal multiple access (GF-NOMA) schemes have been emerging. 
These schemes allow IoT terminals to directly transmit their non-orthogonal preambles followed by data packets over the uplink and avoid complicated access requests for resource scheduling \cite{Grant-free}.
By exploiting the intrinsic sporadic traffic, the receiver of the base station (BS) can 
separate the non-orthogonal preambles transmitted by different terminals and thus identify the active terminal set (ATS)  with compressive sensing (CS) techniques.
Benefitting from the non-orthogonal resource allocation, the GF-NOMA schemes can improve the system throughput with limited radio resources. 
To date, the state-of-the-art CS-based GF-NOMA study mainly focuses on two typical problems:
1) joint active terminal identification (ATI) and signal detection (SD); 
2) joint ATI and channel estimation (CE). 
 
The former category is developed by assuming the perfect channel state information (CSI) known at the BS \cite{wang1,wang2,JADD-AMP,JADD-MAP,Du1,Du2} or the perfect pre-equalization 
at the terminals (e.g., based on the beacons periodically broadcast by the BS \cite{Yikunmei}),
where CSI is usually regarded to be \textit{quasi-static}.
In particular, \cite{wang1} and \cite{Du2} proposed a
structured iterative support detection algorithm and a block sparsity based subspace pursuit (SP) algorithm, respectively, to jointly perform ATI and SD in one signal frame (consists of multiple continuous time slots), where the terminals' activity is assumed to remain unchanged.
\cite{wang2} and \cite{Du1} further relaxed the assumption, i.e., the ATS may vary in several continuous time slots,
and developed a modified OMP algorithm and \textit{a priori} information aided adaptive SP algorithm, respectively, to perform dynamic ATI and SD, where the estimated ATS is exploited as \textit{a priori} knowledge for the identification in the following time slots.
Moreover, to fully exploit the \textit{a priori} information of the transmit constellation symbols for enhanced accuracy, some Bayesian inference-based detection algorithms were proposed in \cite{JADD-AMP,JADD-MAP,Yikunmei}. In \cite{JADD-MAP}, based on the maximum \textit{a posteriori} probability (MAP) criterion, the proposed algorithm calculated \textit{a posteriori} activity probability and soft symbol information to identify the active terminals and detect their payload data, respectively.
To overcome the challenge that the perfect \textit{a priori} information could be unavailable in practical systems, an approximate message passing (AMP)-based scheme was proposed in \cite{JADD-AMP}, where the hyper-parameters 
of terminals' activity and noise variance can be adaptively learned through the expectation-maximization (EM) algorithm.
The above literature is mainly based on the assumption that the CSI is perfectly known at the BS and requires the elements of the adopted spreading sequences to be independent and identically distributed (i.i.d), which can be unrealistic in practice. 
Therefore, \cite{Yikunmei} developed an orthogonal AMP (OAMP)-based ATI and SD algorithm for orthogonal frequency division multiplexing (OFDM) systems, where the CSI can be pre-equalized at the terminals according to the beacon signals broadcast by the BS, and the spreading sequences are selected from the partial discrete Fourier transformation (DFT) matrix. 
 
Another category can be applied to \textit{time-varying} channels, where perfect CSI at the BS or perfect pre-compensation at terminals is unrealistic \cite{JAUDCE,AMP,C-RAN,LEO.IoT,KML}. 
An iterative joint ATI and CE scheme was proposed in \cite{JAUDCE}, where the sparsity of delay-domain channel impulse response (CIR) was exploited and an identified user cancellation approach was proposed for enhanced performance.
By exploiting not only the sparse traffic behavior of IoT terminals, but also the innate heterogeneous path loss effects and the joint sparsity structures in multi-antenna systems,  the authors in \cite{C-RAN} developed a modified Bayesian CS algorithm.
With the full knowledge of the \textit{a priori} distribution of the
channels and the noise variance, the authors in \cite{AMP} developed an AMP-based scheme for massive access in massive multiple-input multiple-output (MIMO) systems. 
For more challenging massive MIMO-OFDM systems, the authors in \cite{KML} proposed a generalized multiple measurement vector (GMMV)-AMP algorithm, where  
the structured sparsity of spatial-frequency domain and angular-frequency domain channels were leveraged with EM algorithm incorporated. 
Moreover, a Bernoulli–Rician message passing with expectation–maximization
(BR-MP-EM) algorithm was proposed for the LEO satellite-based narrowband massive access using single-carrier in \cite{LEO.IoT}.
However, these aforementioned works \cite{JAUDCE,AMP,C-RAN,KML} 
usually assume the channels to be \textit{slowly time-varying}, which can not be directly applied to the highly dynamic TSLs due to the high-mobility of LEO satellites. 
For clarity, the comparison of the aforementioned related works on GF-NOMA is summarized in Table \ref{tab:my-table}.

\subsection{Motivation}

An emerging two-dimensional modulation scheme, orthogonal time frequency space (OTFS), has been widely considered as a promising alternative to the dominant OFDM.
Particularly, OTFS is expected to support reliable communications under high-mobility scenarios in the next-generation mobile communications \cite{OTFS,OTFS Pilot,OFDM-based OTFS, OTFS preamble, OTFS-MA, OTFS-NOMA, OTFS-NOMA syst, OTFS-PDMA, OTFS-fractional Doppler, OTFS-mag, SWQ.OTFS,ZS-OTFS-CE,yjh}.
OTFS multiplexes information symbols on a lattice in the delay-Doppler (DD)
domain and utilizes a compact DD channel model, where the channel in the DD domain is considered to exhibit more stable, separable, and sparse features than that in the TF domain \cite{OTFS-mag}.
Consequently, OTFS can achieve more robust signal processing with additional diversity gain in the presence of Doppler effect \cite{OTFS, OTFS-mag}.
In fact, \cite{OTFS-MA, OTFS preamble, OTFS-PDMA} have integrated the OTFS waveform with OMA 
based on the grant-based access protocols and investigated some new resource allocation schemes. Besides, \cite{OTFS-NOMA, OTFS-NOMA syst} further amalgamated the OTFS modulation scheme with the NOMA technique.   
However, \cite{OTFS-MA,OTFS preamble,OTFS-PDMA,OTFS-NOMA,OTFS-NOMA syst} adopt the \textit{grant-based} RA schemes, which may not cater to the stringent requirements of access latency and massive connectivity for LEO satellite-based IoT.
Moreover, the study of CE for OTFS is only limited to terminals employing OMA scheme \cite{SWQ.OTFS, OTFS Pilot, ZS-OTFS-CE, OTFS-fractional Doppler}.

\subsection{Contributions}

{In this paper, we propose a GF-NOMA paradigm that incorporates OTFS modulation to provide a blend of mMTC and enhanced mobile broadband (eMBB) services for LEO satellite-based IoT, and investigate the challenging ATI, CE, and SD problems.}
The main contributions of this paper are summarized as follows.
\begin{itemize}
	\item  \textbf{GF-NOMA-OTFS paradigm}:
	We propose to apply the GF-NOMA scheme employing OTFS
	waveform (GF-NOMA-OTFS) to LEO satellite-based massive IoT access. By allowing the
	uncoordinated IoT terminals to transmit the data packets directly, reusing the limited delay-Doppler resources, and exploiting the particular stability, sparsity, and separability of TSLs represented in the DD domain, the GF-NOMA-OTFS paradigm can harvest the benefit of high RA throughput and Doppler-robustness in this context.

	\color{black}
	\item \textbf{Training sequences (TSs) aided OTFS modulation/demodulation architecture}:
	Existing CE solutions for OTFS systems embed the pilot and guard symbols in the DD domain \cite{OTFS Pilot,SWQ.OTFS, ZS-OTFS-CE, OTFS-fractional Doppler}.		
	However, in the case of highly dynamic TSLs with extremely severe Doppler shifts, 
	the compactness of the DD domain channel could be destroyed, which would give rise to a dramatical increase of guard symbols. 
	Moreover, the low-resolution of Doppler lattices could lead to the severe Doppler spreading even each TSL's Doppler shift is a single value, which would further deteriorate the performance and effectiveness of signal processing in the DD domain.
	To circumvent these challenges, we utilize the time domain TSs to replace the conventional DD domain pilot and guard symbols for performing ATI and CE, and further propose a TSs aided OTFS (TS-OTFS) modulation/demodulation architecture.	
	
	\item \textbf{Successive ATI, CE, and SD method}: 	
	Furthermore, we put forward a two-stage successive ATI and CE scheme as well as a following low-complexity multi-user SD for the GF-NOMA-OTFS paradigm.
	Specifically, for the ATI and CE, at the first stage, the proposed time domain TSs facilitate the joint ATI and coarse CE, whereby both the traffic sparsity of terrestrial IoT terminals and the structural sparse CIR are leveraged.
	On this basis, a parametric approach is developed to further refine the CE performance,
	whereby the single Doppler shift property for each TSL and sparsity of DD domain channel are exploited. 
	Finally, a least square (LS)-based parallel time domain SD is developed for detecting the OTFS signals with relatively low complexity.
	Simulations and performance evaluation are conducted to varify the effectiveness of the proposed successive ATI, CE, and SD method.
\end{itemize}

\subsection{Organization}

The remainder of this paper is organized as follows. In Section \ref{S2}, 
we introduce the TSL model of the LEO satellite-based IoT.
The GF-NOMA-OTFS paradigm and TS-OTFS modulation/demodulation architecture are proposed in 
Section \ref{S3}.
In Section \ref{S4}, the proposed successive ATI and CE scheme for the GF-NOMA-OTFS paradigm is presented. 
Then, in Section \ref{S5}, we further propose a multi-user signal detector based on the previous results of ATI and CE. 
The effectiveness of our proposed scheme is demonstrated by simulation results in Section \ref{S6}. 
Finally, conclusions are drawn in Section \ref{S7}. 
{The important variables of the system model adopted in the paper are listed in Table I for ease of reference.}

\subsection{Notations}

Throughout this paper, scalar variables are denoted
by normal-face letters, while boldface lower and upper-case letters
denote column vectors and matrices, respectively. 
The transpose, Hermitian transpose, inversion, and pseudo-inversion for matrix
are denoted by $(\cdot)^{\rm T}$, $(\cdot)^{\rm H}$, $(\cdot)^{-1}$, and $(\cdot)^{\dagger}$, respectively.
Besides, $|\cdot|$, $\Vert \cdot \Vert_2$, and $\Vert \cdot \Vert_{\rm F}$ represent modulus, $\ell_2$-norm, and Frobenius-norm, respectively.
$\mathbf{X}_{[m,n]}$ is the $(m,n)$-th element of matrix $\mathbf{X}$; 
$\mathbf{X}_{[m,:]}$ and $\mathbf{X}_{[:,n]}$ are the $m$-th row vector and the $n$-th column vector of matrix $\mathbf{X}$, respectively. 
$\mathbf{X}_{[:,\mathcal{I}]}$ and $\mathbf{X}_{[:,\mathcal{I}]}$ denote the submatrix consisting of the columns and rows of $\mathbf{X}$ indexed by the ordered set $\mathcal{I}$, respectively.
$\mathbf{x}_{[n]}$ denotes the $n$-th element of $\mathbf{x}$. 
Furthermore, $|\mathcal{A}|_c$ is the cardinality of the set $\mathcal{A}$, and $\mathrm{supp}(\cdot)$ is the support set of a vector or a matrix. 
$\mathcal{A} - \mathcal{B}$ denotes the difference set of $\mathcal{A}$ with respect to $\mathcal{B}$, and $\mathcal{A} \cap \mathcal{B}$ denotes the intersection of $\mathcal{A}$ and $\mathcal{B}$.
The operators $\odot$ and $\otimes$ represent the Hadamard product and Kronecker product, respectively. 
The operator $\mathrm{vec}(\mathbf{X})$ stacks the
columns of $\mathbf{X}$ on top of each another, and $\mathrm{mat}(\mathbf{x};m,n)$ converts the vector $\mathbf{x}$ of size $mn$ into the matrix of size $m \times n$ by successively selecting every $m$ elements of $\mathbf{x}$ as its columns.
$<\mathbf{x}_1,\mathbf{x}_2>$ represents the inner product of $\mathbf{x}_1$ and $\mathbf{x}_2$. Finally, $\mathbf{I}_n$ is the identity matrix of size $n \times n$, $\mathbf{0}_{n \times m}$ is the $n \times m$ zero matrix, $\emptyset$ denotes the empty set, and $\delta(x)$ is the Dirac function.

\begin{table}[t]
	\centering
	\captionsetup{font={color = {black}}, justification = raggedright,labelsep=period}
	\caption{Variable list}	
	\label{tab:my-table5}
	\resizebox{0.95\columnwidth}{!}{%
		\begin{tabular}{cl}
			\hline
			\textbf{Notation} & \multicolumn{1}{c}{\textbf{Defination}} \\ \hline
			$P^s$,$P^t$ & Number of antennas for satellite and terminal \\ \hline
			$K$,$K_a$ & Number of potential and active terminals \\ \hline
			$\alpha_k$,$\mathcal{A}$ & Activity indicator and active terminal set \\ \hline
			$\gamma_k$,$Q_k$ & Rician factor, number of NLoS paths \\ \hline
			$g_k^{\rm LoS}$,$g_k^q$ & Complex path gain of LoS and NLoS paths \\ \hline
			$\tau_k^{\rm LoS}$,$\tau_k^q$ & RToA and propagation delay of NLoS paths \\ \hline
			$\nu^{\rm LoS}_k$,$\nu^q$ & Doppler shift \\ \hline
			$\theta^{\rm zen}_k$,$\theta^{\rm azi}_k$,$\theta^{{\rm zen}^{\prime}}$,$\theta^{{\rm azi}^{\prime}}$ & Zenith and azimuth angles of Rx and Tx \\ \hline
			$\mathbf{v}_R$,$\mathbf{v}_T$ & Steering vector of Rx and Tx \\ \hline
			$g_k^{\rm ABF-LoS}$,$g_k^{{\rm ABF}-q}$ & Analog beamforming gain \\ \hline
			$\mathbf{H}_k^{\rm DD}(\tau,\nu)$ & DD domain uplink CIR \\ \hline
			$\mathbf{h}_k^{\rm eff-DD}(\tau,\nu)$ & DD domain effective baseband CIR \\ \hline
			$\mathbf{h}^{\rm eff}_k(t,\tau)$ & Time-delay domain CIR \\ \hline
			$h^{\rm eff}_{k,p}[\kappa,\ell]$,$h^{\rm eff}_{k,p}[\ell,\upsilon]$ & Discrete time-delay domain and DD domain CIR \\ \hline
			$(M,N,M_t)$ & Size of TS-OTFS frame \\ \hline
			$G$,$L$ & Length of non-ISI region and CIR \\ \hline
			$\Delta f$,$B_w$,$T_s$,$T$ & \begin{tabular}[c]{@{}l@{}}Subcarrier spacing, bandwidth, sampling interval\\ frame duration\end{tabular} \\ \hline
			$\mathbf{X}_k^{\rm DD}$,$\mathbf{X}_k^{\rm TF}$,$\tilde{\mathbf{S}}_k$ & Data symbols in the DD, TF, and time domain \\ \hline
			$\tilde{\mathbf{s}}_k$,$\mathbf{c}_k$,$\mathbf{s}_k$ & \begin{tabular}[c]{@{}l@{}}Time domain data symbols, training sequences, \\ and transmit signal\end{tabular} \\ \hline
			$\mathbf{r}_p$,$\hat{\tilde{\mathbf{s}}}_k$, & Received signal, detected OTFS payload data \\ \hline
			$\hat{\tilde{\mathbf{S}}}_k$,$\hat{\mathbf{X}}^{\rm TF}$,$\hat{\mathbf{X}}^{\rm DD}$ & Detected time, TF, and DD domain data symbols \\ \hline
		\end{tabular}%
	}
\end{table}

\newcounter{mytempeqncnt}
\begin{figure*}[t]
	% ensure that we have normalsize text
	\normalsize
	% Store the current equation number.
	
	% Set the equation number to one less than the one
	% desired for the first equation here.
	% The value here will have to changed if equations
	% are added or removed prior to the place these
	% equations are referenced in the main text.
	%	\setcounter{equation}{7}
	\begin{align}\label{eq:CIR}
		\begin{split}
			\mathbf{H}^{\rm DD}_k(\tau, \nu)  = & \sqrt{\frac{\gamma_k}{\gamma_k+1}} 
			g_{k}^{\rm LoS}  \delta (\tau-\tau_k^{\rm LoS}) \delta(\nu-\nu_k^{\rm LoS})
			\mathbf{v}_{R}(\theta^{\rm zen}_{k,\rm LoS}, \theta^{\rm azi}_{k,\rm LoS}) \mathbf{v}_{T}^{\rm H}(\theta^{{\rm zen}^{\prime}}_{k,\rm LoS}, \theta^{{\rm azi}^{\prime}}_{k,\rm LoS})
			\\ & +  \sqrt{\frac{1}{\gamma_k+1}} \sum_{q=1}^{Q_k} g_{k}^{q}  \delta(\tau-\tau_{k}^{q})\delta(\nu-\nu_k^{q}) \mathbf{v}_R (\theta^{\rm zen}_{k,q}, \theta^{\rm azi}_{k,q}) \mathbf{v}_{T}^{\rm H}(\theta^{{\rm zen}^{\prime}}_{k,q}, \theta^{{\rm azi}^{\prime}}_{k,q}),
		\end{split}	
	\end{align}	 
	\setcounter{equation}{3}
	\begin{align}\label{eq:eff-DD}
		\begin{split}
		\mathbf{h}^{\rm eff-DD}_k(\tau, \nu) & =  \sqrt{\frac{\gamma_k}{\gamma_k+1}} g_k^{\rm LoS} g_k^{\rm ABF-LoS}   \delta (\tau-\tau_k^{\rm LoS}) \delta(\nu-\nu_k^{\rm LoS})
		\mathbf{v}_{R}(\theta^{\rm zen}_{k}, \theta^{\rm azi}_{k}) \\
		& +  \sqrt{\frac{1}{\gamma_k+1}} \sum_{q=1}^{Q_k} g_{k}^{q} g_k^{{\rm ABF}-q}  \delta(\tau-\tau_{k}^{q})\delta(\nu-\nu_k^{q}) \mathbf{v}_R (\theta^{\rm zen}_{k}, \theta^{\rm azi}_{k}),
		\end{split}
	\end{align}
	\hrulefill
\end{figure*}
\setcounter{equation}{1}

\section{Terrestrial-Satellite Link Model}\label{S2}

As illustrated in Fig.~\ref{fig1:system model}, {we consider that 
the LEO constellations consisting of a large number of LEO satellites can 
provide ubiquitous connections and eMBB services for massive IoT terminals\footnote{As we mainly focus on RA for one satellite in the physical layer in this paper for obtaining important RA design insights, the problems of interference and coordination between multiple satellites will not be involved \cite{Liu.LEO}.} \cite{IoRT,LEO-IoT,embb}.}
Each LEO satellite is equipped with a uniform planar array (UPA) composed of  $P^s = P_x^s \times P_y^s$ antennas, where $P_x^s$ and $P_y^s$ are the number of antennas on the x-axis and y-axis, respectively.
Meanwhile, a $P^t = P_x^t \times P_y^t$ phased array with analog beamforming is assumed to be employed at each IoT terminal.
Due to the sporadic traffic behavior in typical IoT \cite{spoadic traffic}, within a given time interval, the number of active IoT terminals $K_a$ can be much smaller than the number of all potential IoT terminals $K$, i.e., $K_a \ll K$. 
The active IoT terminals transmit RA signals and the inactive remain silent.
To reflect the activity status of all potential IoT terminals, we define an activity indicator $\alpha_k$, which is equal to 1 when the $k$-th IoT terminal is active and 0 otherwise. Meanwhile, the ATS is defined as $\mathcal{A} = \{k|\alpha_k = 1, 1 \le k \le K\}$ and its cardinality is $K_a = |\mathcal{A}|_c$.

Since analog beamforming at the IoT terminals can be approximately implemented with the predictable trajectory of LEO satellites in theory, the TSLs connecting the LEO satellites and terrestrial IoT terminals would experience few propagation scatterers and the line-of-sight (LoS) links could rarely be blocked by obstacles \cite{NB-IoT}.
It is reasonable to assume there coexist the LoS and few non-LoS (NLoS) links when employing the X-band and above.  
Therefore, the DD domain uplink channel between the LEO satellite and the $k$-th served  IoT terminal can be expressed as Eq. (\ref{eq:CIR}) \cite{You.LEO, OTFS-mag},
where the first term corresponds to the LoS path and the NLoS paths contribute to the other $Q_k$ terms.
In (\ref{eq:CIR}), $\nu_{k}^{\rm LoS}$ and $\nu_{k}^q$ respectively denote the Doppler shift of the LoS and the $q$-th NLoS path, $\tau_{k}^{\rm LoS}$ and $\tau_{k}^q$ respectively denote the remanent relative time of arrive (RToA) and delay of the $q$-th NLoS path, $\gamma_k$ and $g_{k}^{\rm LoS} (g_{k}^q)$ are respectively the Rician factor and complex path gain, 
$\mathbf{v}_{R}(\theta^{\rm zen}_{k,\rm LoS}, \theta^{\rm azi}_{k,\rm LoS}) \,
\left( \mathbf{v}_R (\theta^{\rm zen}_{k,q}, \theta^{\rm azi}_{k,q}) \right) \in \mathbb{C}^{P^s \times 1}$ and  $\mathbf{v}_{T}(\theta^{{\rm zen}^\prime}_{k,\rm LoS}, \theta^{{\rm azi}^\prime}_{k,\rm LoS}) \,
\left( \mathbf{v}_T (\theta^{{\rm zen}^\prime}_{k,q}, \theta^{{\rm azi}^\prime}_{k,q}) \right) \in \mathbb{C}^{P^t \times 1}$ denote the UPA's steering vector for the LEO satellite and IoT terminal, respectively. 
The further explanations of these parameters are detailed as follows.

\begin{figure}[t]	
	\centering
	\includegraphics[width=\columnwidth, keepaspectratio]{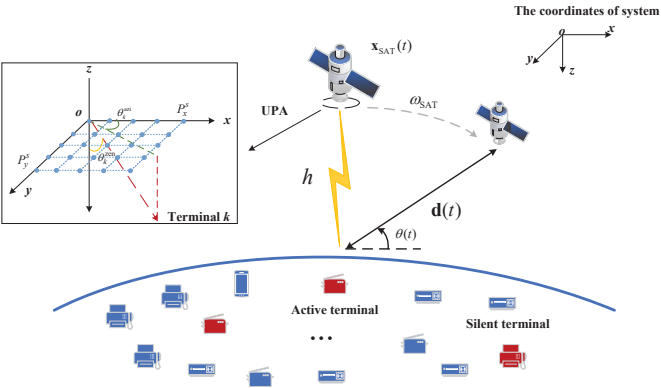}
	\captionsetup{font={color = {black}}, justification = raggedright,labelsep=period}
	\caption{Illustration of the LEO satellite-based IoT based on the proposed GF-NOMA-OTFS scheme.}
	\label{fig1:system model}	
\end{figure}

\begin{itemize}	 	
	\item \textbf{Array steering vector}:
	Since the TSL's distance is far larger than the distances between the terminal and its surrounding scatterers, the angles of arrival (AoAs), i.e., the zenith angle $\theta_{k,{\rm LoS}}^{\rm zen} \, (\theta_{k,q}^{\rm zen})$ and the azimuth angle $\theta_{k,{\rm LoS}}^{\rm azi} \, (\theta_{k,q}^{\rm azi})$, for the $k$-th terminal can be assumed to be almost identical \cite{You.LEO}, i.e., $ \theta_{k,{\rm LoS}}^{\rm zen} \approx \theta_{k,{q}}^{\rm zen}  = \theta_{k}^{\rm zen}$ 
	and $ \theta_{k,{\rm LoS}}^{\rm azi} \approx \theta_{k,{q}}^{\rm azi}   = \theta_{k}^{\rm azi}$. 
	Therefore, the UPA's steering vector for the LEO satellite can be simplified as 
	\begin{align}\label{eq:array steering vector}
		\begin{split}
			\mathbf{v}_R(\theta_{k}^{\rm zen},\theta_{k}^{\rm azi}) = \frac{1}{\sqrt{P^s}} \left[e^{-j2\pi \frac{d}{\lambda}  \sin(\theta^{\rm zen}_k) \cos(\theta^{\rm azi}_k)  \mathbf{p}_1}\right]   
			\\ \otimes \left[e^{-j2\pi \frac{d}{\lambda} \sin(\theta^{\rm zen}_k) \sin(\theta^{\rm azi}_k)  \mathbf{p}_2} \right] 
		\end{split},
	\end{align}	
	where $\mathbf{p}_1 = \left[0, 1, \dots, P_x^s-1\right]^{\rm T}$, $\mathbf{p}_2 = \left[0, 1, \dots, P_y^s-1\right]^{\rm T}$, $\lambda$ is the wavelength of carrier frequency and $d$ is the antenna spacing. Without loss of generality, the elements of the UPA are assumed to be separated by one-half wavelength in both the x-axis and y-axis.  Beisdes, $\mathbf{v}_{T}(\theta^{{\rm zen}^\prime}_{k,\rm LoS}, \theta^{{\rm azi}^\prime}_{k,\rm LoS}) \,\left( \mathbf{v}_T (\theta^{{\rm zen}^\prime}_{k,q}, \theta^{{\rm azi}^\prime}_{k,q}) \right)$ shares a similar expression with $\mathbf{v}_R(\theta_{k}^{\rm zen},\theta_{k}^{\rm azi})$.
	
	\item \textbf{Doppler shift}: 
	The Doppler shift $\nu_{k}^{\rm LoS}(\nu_{k}^q)$ includes two independent components: $\nu_{k}^{\rm LoS-S}$ $(\nu_{k}^{q-{\rm S}})$ and $\nu_{k}^{\rm LoS-T} \, (\nu_{k}^{q-{\rm T}})$ caused by the mobility of LEO satellite and terrestrial IoT terminals, respectively.
	Since LEO satellite moves much faster than IoT terminals, the motion of LEO satellite mainly determines 	$\nu_k^{\rm LoS} \, (\nu_k^q)$, i.e.,  $ \nu_k^{{\rm LoS}-{\rm S}} \gg \nu_k^{{\rm LoS-T}} \, (\nu_k^{q-{\rm S}} \gg \nu_k^{q-{\rm T}})$.
    Besides, combined with the fact that the AoAs of LoS and NLoS links related to the $k$-th terminal are alomst identical, it is reasonable to assume that the Doppler shift of the TSL is single-valued, i.e., $\nu_k^{\rm LoS} \approx \nu_k^q \approx \nu_{k}^{\rm LoS-Sat} \approx \nu_{k}^{q-{\rm Sat}}$ \cite{LMS,NB-IoT}.
     	
	\item \textbf{Remanent RToA and multipath components' (MPCs') delay}:
	Since IoT terminals’ locations are geographically distributed, the ToA of signals received from different terminals may undergo severe time offsets.
	Although \cite{syn} proposed a repetition code spreading scheme without  synchronization and scheduling, its cost is significant throughput reduction and intra-system interference. In contrast, we consider the major part of time offsets can be compensated by timing advance \cite{TA}, while the remanent RToA is denoted as $\tau_{k}^{\rm LoS}$. Meanwhile, in the case of MPCs, the relative delay of the $q$-th NLoS path can be denoted as $\tau_{k}^q$.  	
\end{itemize}

\begin{figure*}[t]	
	\centering
	\includegraphics[width=1.85\columnwidth, keepaspectratio]{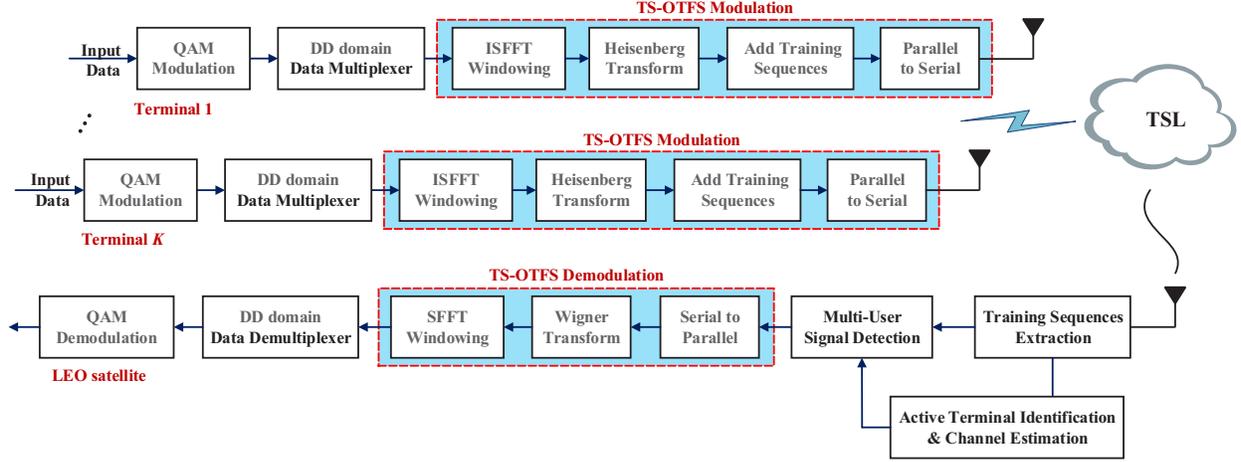}
	\captionsetup{font={color = {black}}, justification = raggedright,labelsep=period}
	\caption{The transceiver structure of the proposed TS-OTFS scheme for the GF-NOMA-OTFS paradigm.}
	\label{fig2:system flowchart}
\end{figure*}

After taking the analog beamforming at the terminals into consideration, the effective baseband channel can be written as 
\begin{align}
	\mathbf{h}^{\rm eff-DD}_k(\tau, \nu) = & \mathbf{H}^{\rm DD}_k(\tau, \nu) \mathbf{v}_{T}(\theta^{{\rm zen}^{\prime\prime}}_{k}, \theta^{{\rm azi}^{\prime\prime}}_{k}), 
\end{align}
where $\mathbf{v}_{T}(\theta^{{\rm zen}^{\prime\prime}}_{k}, \theta^{{\rm azi}^{\prime\prime}}_{k}) \in \mathbb{C}^{P^t \times 1}$ is the analog beamforming vector. From Eq. (\ref{eq:CIR}), $\mathbf{h}^{\rm eff-DD}_k(\tau, \nu)$ can be further represented by Eq. (\ref{eq:eff-DD}), where $g_k^{\rm ABF-LoS} = \mathbf{v}_{T}^{\rm H}(\theta^{{\rm zen}^{\prime}}_{k,\rm LoS}, \theta^{{\rm azi}^{\prime}}_{k,\rm LoS}) \mathbf{v}_{T}(\theta^{{\rm zen}^{\prime\prime}}_{k}, \theta^{{\rm azi}^{\prime\prime}}_{k}) $ and $g_k^{{\rm ABF}-q} = \mathbf{v}_{T}^{\rm H}(\theta^{{\rm zen}^{\prime}}_{k,q}, \theta^{{\rm azi}^{\prime}}_{k,q}) \mathbf{v}_{T}(\theta^{{\rm zen}^{\prime\prime}}_{k}, \theta^{{\rm azi}^{\prime\prime}}_{k})$ are the analog beamforming gain.
\setcounter{equation}{4}
Meanwhile, note that Eq. (\ref{eq:eff-DD}) can be transformed into time-varying CIR through 
\begin{align} \label{eq:CIR2}
	\mathbf{h}_k^{\rm eff}(t, \tau) = \int \mathbf{h}_k^{\rm eff-DD}(\tau, \nu) e^{j 2\pi \nu (t-\tau)} d \nu.
\end{align}

\section{Proposed TS-OTFS Transmission Scheme}\label{S3}

In this section, we introduce the GF-NOMA-OTFS paradigm and the transceiver structure of the proposed TS-OTFS scheme, which is illustrated in Fig.~\ref{fig2:system flowchart}.

\subsection{Modulation of the Proposed TS-OTFS at Transmitter}\label{S3.1}

For the active IoT terminals, the input information bits are first mapped to quadrature amplitude modulation (QAM) symbols and then rearranged in the DD domain plane as $\mathbf{X}^{\rm DD}_k \in \mathbb{C}^{M \times N}, \forall k$. 
Here, $N$ and  $M$ are the dimensions of the latticed resource units in the Doppler domain and delay domain, respectively.
On this basis, the DD domain $\mathbf{X}^{\rm DD}_k$ is parallel-to-serial converted to the transmit signal vector $\mathbf{s}_k$ in the time domain via a cascade of TS-OTFS transformations, which are constituted by a pre-processing module and time-frequency (TF) modulator. 
 
Specifically, the pre-processing module is consistent with that of the traditional OFDM-based OTFS architecture \cite{OFDM-based OTFS}, i.e., the DD domain data $\mathbf{X}^{\rm DD}_k$ is transformed into the TF domain data matrix $\mathbf{X}^{\rm TF}_k \in \mathbb{C}^{M \times N}$ by applying the \textit{inverse symplectic finite Fourier transform (ISFFT)}, which can be written as 
\begin{align}\label{eq:ISFFT}
	\mathbf{X}^{\rm TF}_k = \mathbf{F}_{M}\mathbf{X}^{\rm DD}_k\mathbf{F}_{N}^{\rm H}, \forall k, 	
\end{align}	
where both $\mathbf{F}_{M} \in \mathbb{C}^{M \times M}$ and $\mathbf{F}_{N} \in \mathbb{C}^{N \times N}$ are the DFT matrices. 
Based on the acquired TF domain data matrix $\mathbf{X}^{\rm TF}_k$, the subsequent TF modulator transforms $\mathbf{X}^{\rm TF}_k$ into the transmit signal vector $\mathbf{s}_k$. 
In particular, firstly, \textit{Heisenberg transform} \cite{OTFS} is applied to each column of $\mathbf{X}^{\rm TF}_k$ to produce the time domain data matrix $\tilde{\mathbf{S}}_k \in \mathbb{C}^{M \times N}$ as  
\begin{align}\label{eq:IFFT} 
	\tilde{\mathbf{S}}_k =  \mathbf{W}^{\rm tx} \odot (\mathbf{F}_{M}^{\rm H}\mathbf{X}^{\rm TF}_k), \forall k. 	 
\end{align}	
For simplicity and without loss of generality, a rectangular window, namely $\mathbf{W}^{\rm tx}$ with all elements equal to one, is adopted in this paper. In this case, the \textit{Heisenberg transform} degenerates into Fourier transform.

%and its vector form is denoted as $\tilde{\mathbf{s}}_k = \mathrm{vec}\{\tilde{\mathbf{S}}_k\} \in \mathbb{C}^{MN \times 1}$. 

Furthermore, for the traditional OFDM-based OTFS architecture, a  cyclic prefix (CP) is added to the front of each time domain OTFS symbol $\tilde{\mathbf{s}}_{k}^i \in \mathbb{C}^{M \times 1}$ ($\tilde{\mathbf{s}}_{k}^i$ is the $i$-th column vector of  $\tilde{\mathbf{S}}_k$).
By contrast, for the proposed TS-OTFS scheme, $N+1$ duplicate TSs with the length of $M_t$, denoted by $\mathbf{c}_k = \left[c_{k,0} ~ c_{k,1} \dots ~ c_{k,M_{t}-1}\right]^{\rm T} \in \mathbb{C}^{M_t \times 1}$, are appended to the front and rear of
the OTFS payload data as illustrated in Fig.~\ref{fig3:data frame}.
These time domain TSs are known by the transceiver, and they can not only be utilized to avoid inter-symbol-interference (ISI) over time dispersive channels, but also perform ATI and CE (will be detailed in Section \ref{S4}). 
Finally, the transmit TS-OTFS signal consisting of TSs and time domain OTFS payload data  $\mathbf{s}_k  = \left[ \mathbf{c}_k^{\rm T}, \tilde{\mathbf{s}}_{k}^{1^{\rm T}}  , \mathbf{c}_k^{\rm T}  , \tilde{\mathbf{s}}_{k}^{2^{\rm T}} , \dots , \mathbf{c}_k^{\rm T}  , \tilde{\mathbf{s}}_{k}^{N^{\rm T}}  , \mathbf{c}_k^{\rm T}   \right]^{\rm T} \in \mathbb{C}^{(M_tN+MN+M_t) \times 1}$ is obtained through the  parallel-to-serial conversion.    

\begin{figure*}[t]	
	\centering
	\includegraphics[width=1.9\columnwidth, keepaspectratio]{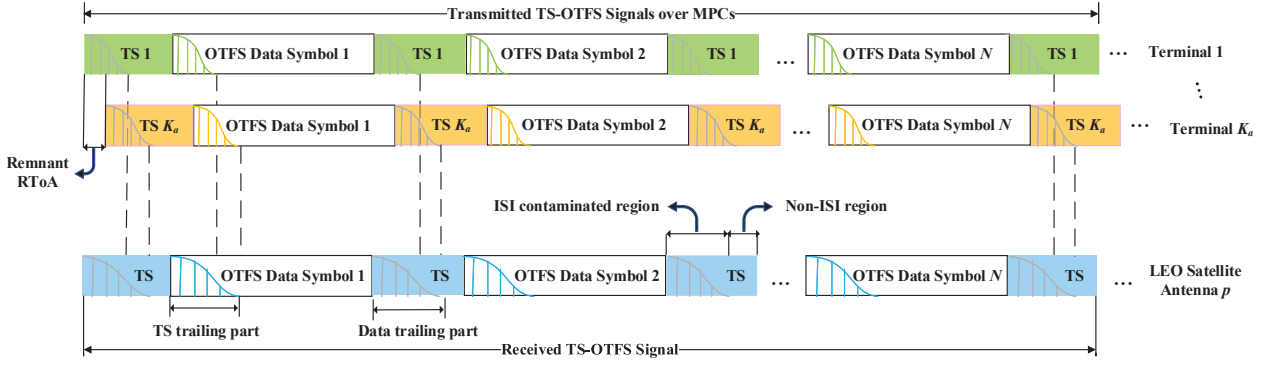}
	\captionsetup{font={color = {black}}, justification = raggedright,labelsep=period}
	\caption{Transmit and receive signal structures of the proposed TS-OTFS scheme at the transceiver.}
	\label{fig3:data frame}	
\end{figure*}

\subsection{Proposed TS-OTFS Demodulation at Receiver}\label{S3.2}

In fact, the discrete form of Eq. (\ref{eq:CIR2}) can be rewritten as
\begin{align}\label{eq:T-CIR}
	\begin{split}
		h_{k,p}^{\rm eff}[\kappa,\ell] = h_{k,p}^{\rm eff}(t,\tau)\!\mid_{t = \kappa T_s, \, \tau = \ell T_s} 	
	\end{split},
\end{align}
where $h_{k,p}^{\rm eff}(t,\tau)$ is the $p$-th element of $\mathbf{h}_{k}^{\rm eff} (t,\tau)$ and $T_s$ is the sampling interval of the system.
Moreover, the discrete form of Eq. (\ref{eq:eff-DD}) can be denoted as\footnote{Due to the large value of the sampling interval $T$ in the Doppler domain, the fractional part of normalized Doppler \textit{can not} be neglected \cite{yjh}, which means the normalized Doppler shift $\upsilon_k^{\rm LoS} = \nu_k^{\rm LoS} NT$ (also $\upsilon_k^{q}$) tends to be off-gird and comprises integer and fractional components.}
\begin{align}\label{eq:discrete CIR}
	\begin{split}
		h^{\rm eff-DD}_{k,p}[\ell, \upsilon] = h^{\rm eff-DD}_{k,p}(\tau,\nu)\!\mid_{\tau = \frac{\ell}{M \Delta f}, \, \nu = \frac{\upsilon}{NT} } 
	\end{split},
\end{align}
where $h^{\rm eff-DD}_{k,p} (\tau,\nu)$ is the $p$-th element of $\mathbf{h}_{k}^{\rm eff-DD} (\tau,\nu)$,
$\Delta f$ is the frequency spacing between adjacent sub-carriers, and $T = (M+M_t)T_s$ is the duration of one TS-OTFS symbol. 

Therefore, the $\kappa$-th element of the signal $\mathbf{r}_p \in \mathbb{C}^{(M_tN+MN+M_t) \times 1}$ received at the $p$-th antenna is the superposition of the signals received from all active terminals, which can be expressed as 
\begin{align}\label{eq:IO}
	r_p(\kappa) = \sum_{k=1}^K \sum_{l=0}^{L-1} \alpha_k  \sqrt{P_k}  h_{k,p}^{\rm eff}\left[\kappa,\ell \right] s_k\left[\kappa-\ell \right] + w_p(\kappa), \forall p, 
\end{align} 
where $P_k$ denotes the transmit power of the $k$-th terminal, $L-1$ represents the maximum of remanent RToA and MPCs' delay, and $w_p(\kappa) \sim \mathcal{CN}(0,\sigma^2_w)$ denotes the additive white Gaussian noise (AWGN) at the receiver.

The receiver of LEO satellite consists of two cascaded modules: the first one performs ATI, CE, and multi-user SD, and the others demodulate the OTFS payload data.
For the first one, the receiver of LEO satellites firstly extracts TSs from the received signals to perform ATI and CE. 
With the identified active terminal set (ATS) $\hat{\mathcal{A}}$ and their corresponding CSI, 
the proposed multi-user signal detector detects the payload data for the ATS to obtain  $\hat{\tilde{\mathbf{s}}}_k \in \mathbb{C}^{MN \times 1}, k \in \hat{\mathcal{A}}$. The above ATI, CE, and SD modules will be discussed in detail in the following Sections \ref{S4} and \ref{S5}. 

For the TS-OTFS demodulation, it is equivalent to the inverse operation of the modulation, which consists of a TF demodulator and a post-processing module, and transforms the detected time domain OTFS payload data $\hat{\tilde{\mathbf{s}}}_k$ to the original DD domain $\hat{\mathbf{X}}_{k}^{\rm DD}$.
In particular, $\hat{\tilde{\mathbf{s}}}_k$ can be rewritten as time domain 2D data matrix $\hat{\tilde{\mathbf{S}}}_k \in \mathbb{C}^{M \times N}$ through serial-to-parallel conversion, i.e.,  
\begin{align}\label{eq:SPC}  
	\hat{\tilde{\mathbf{S}}}_k =  {\rm mat}\left( \hat{\tilde{\mathbf{s}}}_k ,M,N \right ), \forall k,
\end{align}
Then, the \textit{Wigner transform} \cite{OTFS} is applied to recover the TF data $\hat{\mathbf{X}}^{\rm TF}_{k}$ as
\begin{align}\label{eq:FFT}    
	\hat{\mathbf{X}}^{\rm TF}_{{k}} =  \mathbf{W}^{\rm rx} \odot (\mathbf{F}_M \hat{\tilde{\mathbf{S}}}_k), \forall k.
\end{align}
where a rectangular window $\mathbf{W}^{\rm rx}$ is adopted similar to the transmitter in Eq. (\ref{eq:IFFT}).
In the post-processing module, \textit{symplectic finite Fourier transform (SFFT)} is applied to $\hat{\mathbf{X}}^{\rm TF,W}_{k}$ 
for restoring the TF domain OTFS data to DD domain as
\begin{align}\label{eq:SFFT}
	\hat{\mathbf{X}}_{k}^{\rm DD} = \mathbf{F}_M^{\rm H}\hat{\mathbf{X}}^{\rm TF,W}_{k}\mathbf{F}_N, \forall k.	
\end{align}

\section{Proposed Active Terminal Identification and Channel Estimation}\label{S4}

To handle the challenging ATI and CE over TSLs with severe Doppler effect, 
we propose a two-stage successive ATI and CE scheme for the GF-NOMA-OTFS paradigm.

\begin{figure*}[t]
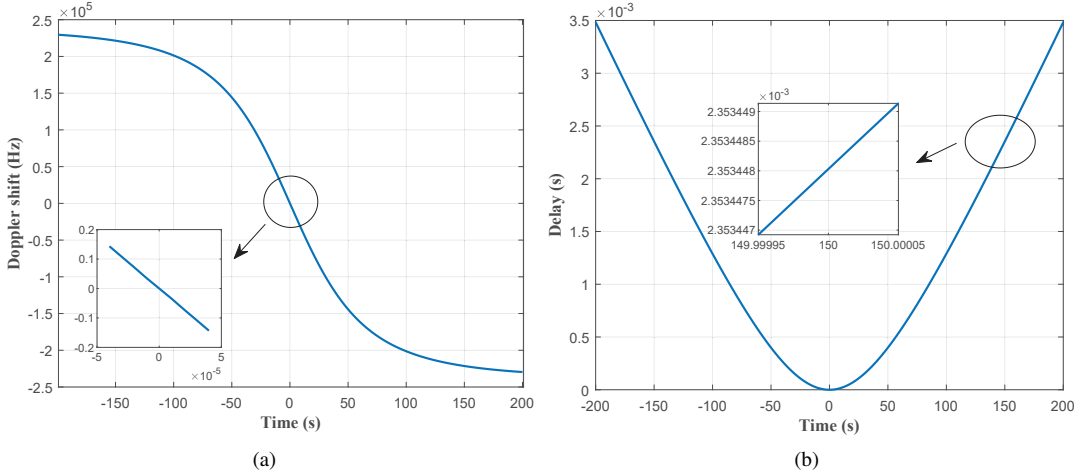
	
	\centering
	\subfigure[]{\includegraphics[width=0.8\columnwidth, keepaspectratio]{fig4/Doppler_shift.pdf}}
	\subfigure[]{\includegraphics[width=0.8\columnwidth, keepaspectratio]{fig4/Delay.pdf}}
	\captionsetup{font={color = {black}}, justification = raggedright,labelsep=period}
	\caption{The Doppler and delay variation as a function of time: (a) Doppler shift; (b) Relative delay.} 
	\label{fig:Doppler shift}		
\end{figure*} 

\subsection{Problem Formulation of Successive ATI and CE}\label{S4.1}

The structure of time domain TS-OTFS signal vector $\mathbf{s}_k \, (1 \le k \le K)$ and its received version at the $p$-th receive antenna  $\mathbf{r}_p \, (1 \le p \le P^s)$ are illustrated in Fig. \ref{fig3:data frame}, both of which consist of the OTFS payload data and the embedded TSs part.
One significant challenge to perform ATI and CE based on the received TSs, lies in the fact that each received TS is contaminated by the previous OTFS data symbol due to the time dispersive CIR of each TSL and the remnant RToA  among different terminals' TSLs.
An effective approach is to utilize the non-ISI region as illustrated in Fig. \ref{fig3:data frame}, which is the rear part of the TSs and immune from the influence of the previous OTFS data symbol \cite{Gao.TDS-OFDM}.
Therefore, the TS's length $M_t$ is designed to be longer than {the maximum of remnant RToA and MPCs' delay $L-1$} in order to ensure the non-ISI region with sufficient length, and thus the length of non-ISI region can be denoted as $G \triangleq M_t - L + 1$.

In this way, according to (\ref{eq:IO}), the non-ISI region of the $i$-th $ (1 \le i \le N+1)$ TS $\mathbf{r}_{{\rm TS},p}^{i} \in \mathbb{C}^{G \times 1}$ can be expressed as 
\begin{align}\label{eq:vec IO} 
	 \mathbf{r}_{{\rm TS},p}^{i} =  & \sum_{k=1}^K \alpha_k  \sqrt{P_k} \left( \mathbf{\Delta}_k^{\rm LoS} \mathbf{\Psi}_{k} \mathbf{h}_{{\rm TS},k,p}^{{\rm eff}-i,{\rm LoS}} + \sum_{q=1}^{Q_k} \mathbf{\Delta}_k^{q}    \mathbf{\Psi}_{k}  \mathbf{h}_{{\rm TS},k,p}^{{\rm eff}-i,q} \right) \nonumber \\ 
	&  + \mathbf{w}_{{\rm TS},p}^{i}, \forall i, p, 
\end{align}
where $\mathbf{h}_{{\rm TS},k,p}^{{\rm eff}-i,{\rm LoS}} \in \mathbb{C}^{L
	 \times 1}$ and  $\mathbf{h}_{{\rm TS},k,p}^{{\rm eff}-i,q} \in \mathbb{C}^{L \times 1}$  denote the LoS and NLoS components of the vector form of CIR $\mathbf{h}_{{\rm TS},k,p}^{{\rm eff}-i}$ (aligned with the instant of the beginning of the $i$-th non-ISI region) as
\begin{align}
	\mathbf{h}_{{\rm TS},k,p}^{{\rm eff}-i} = \mathbf{h}_{{\rm TS},k,p}^{{\rm eff}-i,{\rm LoS}} + \sum_{q=1}^{Q_k}  \mathbf{h}_{{\rm TS},k,p}^{{\rm eff}-i,q}, \forall i,p,k,
\end{align}
$\mathbf{w}_{{\rm TS},p}^{i} \in \mathbb{C}^{G \times 1}$ is the vector form of AWGN, $\mathbf{\Psi}_k \in \mathbb{C}^{G \times L}$ is a Toeplitz matrix given by \cite{Gao.TDS-OFDM} 
\begin{align}\label{eq:Toeplitz matrix} 
	\mathbf{\Psi}_k =& \left[ \begin{array}{cccc}
		c_{k,L-1} & c_{k,L-2} & \cdots & c_{k,0} \\
		c_{k,L} & c_{k,L-1} & \cdots & c_{k,1} \\
		\vdots & \vdots & \ddots & \vdots \\
		c_{k,M_t-1} & c_{k,M_t-2} & \cdots & c_{k,M_t-L} \\
	\end{array} \right],
\end{align}	
$\mathbf{\Delta}_k^{\rm LoS} \, (\mathbf{\Delta}_k^{q}) \in \mathbb{C}^{G \times G}$ is the diagonal Doppler shift matrix associated with the LoS (the $q$-th NLoS) path as $	 \mathbf{\Delta}_k^{\rm LoS} = \mathrm{diag}\left\{   
e^{ \frac{j2\pi\upsilon_k^{\rm LoS}}{N(M+M_t)} \cdot \left[ (-\ell_k^{\rm LoS}), \dots,
0, \dots,  (G-\ell_k^{{\rm LoS}}-1)  \right]^{\rm T} }	\right\}$, and $\mathbf{\Delta}_k^{q}$ shares a similar expression.

Since both $\mathbf{\Delta}_k^{\rm LoS}$ and $\mathbf{\Delta}_k^{q}$ are unknown matrices for the receiver of LEO satellites, it would be infeasible to recover the sparse CIR vectors in Eq. (\ref{eq:vec IO}) with the unknown sensing matrices.
Fortunately, on one hand, the duration of each non-ISI region is always short enough. When the TSLs are assumed to be unchanged in this region, the approximation error could not be obvious for the support set estimation in the sparse CIR vector recovery (which will be verified through simulations in Section \ref{S6}). 
On the other hand, it will be clarified in Remark \ref{R2} that the ambiguity of recovered non-zero elements caused by this approximation can be compensated through the following CE refinement.  
In this case, both $\mathbf{\Delta}_k^{\rm LoS}$ and $\mathbf{\Delta}_k^{q}$ are approximate to the identity matrices, and (\ref{eq:vec IO}) can be rewritten as 
\begin{align}\label{eq:approx vec} 	 
		\mathbf{r}_{{\rm TS},p}^{i} & = \sum_{k=1}^K \alpha_k  \sqrt{P_k}  \mathbf{\Psi}_{k}  \underbrace{ \left( \mathbf{h}_{{\rm TS},k,p}^{{\rm eff}-i,{\rm LoS}} + \sum_{q=1}^{Q_k}  \mathbf{h}_{{\rm TS},k,p}^{{\rm eff}-i,q} \right) }_{\mathbf{h}_{{\rm TS},k,p}^{{\rm eff}-i}} + \tilde{\mathbf{w}}_{{\rm TS},p}^{i} \nonumber \\
	 	& = \mathbf{\Psi} \tilde{\mathbf{h}}_{{\rm TS},p}^{{\rm eff}-i} + \tilde{\mathbf{w}}_{{\rm TS},p}^{i},
\end{align}
where $\tilde{\mathbf{h}}^{{\rm eff}-i}_{{\rm TS},p} = \left[ \tilde{\mathbf{h}}_{{\rm TS},1,p}^{{{\rm eff}-i}^{\rm T}}, \tilde{\mathbf{h}}_{{\rm TS},2,p}^{{{\rm eff}-i}^{\rm T}}, \dots, \tilde{\mathbf{h}}_{{\rm TS},K,p}^{{{\rm eff}-i}^{\rm T}} \right]^{\rm T} \in \mathbb{C}^{KL \times 1}$, $\tilde{\mathbf{h}}_{{\rm TS},k,p}^{{\rm eff}-i} = \alpha_k \sqrt{P}_k \mathbf{h}_{{\rm TS},k,p}^{{\rm eff}-i}$,
$\mathbf{\Psi} = \left[  \mathbf{\Psi}_1, \mathbf{\Psi}_2,  \dots,    \mathbf{\Psi}_K\right] \in \mathbb{C}^{G \times KL}$, 
the approximation error and the AWGN are collectedly considered as the effective noise term  $\tilde{\mathbf{w}}_{{\rm TS},p}^{i}$.

Due to the severe path loss of TSLs, the energy of NLoS paths reflected by scatterers around the terminals could be weak and the number of non-negligible NLoS paths is limited.
Therefore, the delay domain sparsity of the CIR vector $\mathbf{h}_{{\rm TS},k,p}^{{\rm eff}-i}$ can be represented as    
\begin{align}      
	\left|\mathrm{supp}\left\{\mathbf{h}_{{\rm TS},k,p}^{{\rm eff}-i}\right\}\right|_c = Q_k+1 \ll L, \forall i,p,k.
\end{align}
Moreover, combined with the sporadic traffic behaviors, $\tilde{\mathbf{h}}_{{\rm TS},p}^{{\rm eff}-i}$ exhibits the sparsity as  
\begin{align}\label{eq:21}
	\left|\mathrm{supp}\left\{\tilde{\mathbf{h}}_{{\rm TS},p}^{{\rm eff}-i}\right\}\right|_c = \sum_{k \in \mathcal{A}}(Q_k+1) = Q \ll KL, \forall i,p. 	
\end{align} 
It indicates that (\ref{eq:approx vec}) is a typical sparse signal recovery problem with the single measurement vector (SMV) form.

\subsection{Exploiting TSL's Spatial and Temporal Correlations for Enhanced Performance}\label{s} 

To further enhance the system performance, we will explore the structured common sparsity inherent in the TSLs. 
\textit{1) Spatial correlation:}
Specially, for different receive antennas, the RToA, MPCs' delay, and Doppler shift of the signals received from the same terminal are approximately identical. 
This implies that the support sets can be treated as common for sparse CIR vectors $\{ \tilde{\mathbf{h}}_{{\rm TS},k,p}^{{\rm eff}-i} \}_{p=1}^P$, whereas the non-zero coefficients could be distinct. 
\textit{2) Temporal correlation:} 
Additionally, although the TSLs vary continuously with time due to the high mobility of LEO satellites, within the duration of one frame,
the relative positions of the IoT terminals and LEO satellite will not change dramatically. This fact implies that it can also be reasonable to assume the RToA, the propagation delay, and the Doppler shift of signals received from the
same terminal are  approximately identical for multiple adjacent  OTFS symbols within one TS-OTFS frame.  
Hence, the support sets can also be regarded as identical for sparse CIR vectors $\{ \tilde{\mathbf{h}}_{{\rm TS},k,p}^{{\rm eff}-i} \}_{i=1}^{N+1}$.

We will further illustrate the TSL's spatial and temporal correlations with the following example. 
As illustrated in Fig.~\ref{fig1:system model}, it assumes a Cartesian coordinate system such that the moving satellite and the transmitter of the terminal are on the x-z plane. The Doppler shift experienced by a stationary terminal can be computed as follows as a function of time:
\begin{align}
	f_d(t) = \frac{f_c}{c} \frac{\mathbf{d}(t)}{|\mathbf{d}(t)|} \frac{\partial \mathbf{x}_{\rm SAT} (t)}{\partial t},
\end{align}
where $f_c$ is the carrier frequency, $\mathbf{d}(t)$ is the distance vector between the satellite and the terminal, and $\mathbf{x}_{\rm SAT}(t)$ is the vector of the satellite position. These vectors can be expressed as: 
\begin{align}
	& \mathbf{d}(t) = \left[ (R_E+h)\sin(\omega_{\rm SAT} t) ,  0 , R_E - (R_E+h)\cos(\omega_{\rm SAT} t)    \right]^{\rm T}, \\
	& \mathbf{x}_{\rm SAT} (t) = \left[  (R_E+h)\sin(\omega_{\rm SAT} t) , 0 , -(R_E+h)\cos(\omega_{\rm SAT} t) \right]^{\rm T},
\end{align}
where $R_E$ is the Earth radius, $h$ is the satellite altitude, and $\omega_{\rm SAT}$ is the satellite angular velocity. 
The relative delay experienced by a stationary terminal can be computed as well:
\begin{align}
	\tau_{\rm rel} = \frac{|\mathbf{d}(t)|-|\mathbf{d}_0(t)|}{c},
\end{align}
where $\mathbf{d}_0(t)$ is the distance vector when the satellite is closest to the receiver.
After some mathematical manipulation \cite{3GPP}, the Doppler shift as a function of the elevation angle is computed in a closed-form expression as follows:
\begin{align}
	f_d(t) = \frac{f_c}{c} \omega_{\rm SAT} R_E \cos[\theta(t)],
\end{align}
where $c$ is the light velocity.

Fig.~\ref{fig:Doppler shift} illustrates the  Doppler and delay variation as a function of time, where the carrier frequency is $f_c = 10 \ \textrm{GHz}$, and the orbital altitude of LEO satellite is $h = 500 \rm km$. 
It can be observed when $t=0$, the gradient of Doppler shift function reaches its maximum value, and when $t=150$, the gradient of delay function reaches a relative maximum value.
Assuming the size of TS-OTFS frame is $(M,N,M_t)=(256,8,50)$ and the system bandwidth $B_w = 100  \ \rm MHz$. 
Then the resulting maximum Doppler and delay jitter, which are defined as the maximum variation of delay and Doppler parameters in the duration of one TS-OTFS frame $\Delta t = 25 \ \mu \rm s$,  
are $|\Delta f_d| = 0.09 \ {\rm Hz}$ and $|\Delta \tau_{\rm rel}| = 0.55 \ {\rm ns}$, respectively. In fact, when employing the subcarrier spacing $\Delta f = 480 \ \rm kHz$, the Doppler jitter is $2 \times 10^{-5} \%$ of subcarrier spacing, which is negligible for inter-carrier interference. 
Meanwhile, the delay jitter occupies only $5.5\%$ of the sampling interval $T_s = 1/B_w = 1 \times 10^{-8} \ \rm s$. 
Therefore, the spatial and temporal correlation of TSL channels in the DD domain can be  guaranteed. 

\color{black}
\begin{remark}
	The above analysis implies that the stability of TSLs in the DD domain can maintain in the one TS-OTFS frame since both the delay and Doppler jitters are negligible, and thus it determines that the DD domain signal processing of OTFS is effective in this case.
\end{remark}

With the above discussion in mind, we come to the conclusion that the CIR vectors display a common sparsity pattern across the time and spatial domain.  
On this basis, we propose to extend Eq. (\ref{eq:approx vec}) to MMV to jointly process the received signal from multiple antennas and multiple TSs. 
Specifically, by collecting the received signal $\{ \mathbf{r}_{{\rm TS},p}^{i} \}_{p=1}^{P^s}$ from multiple antennas (i.e., different subscript $p$), we can obtain
\begin{align}\label{eq:muti-antenna}  
	\mathbf{R}_{\rm TS}^i = \mathbf{\Psi}\tilde{\mathbf{H}}^{{\rm eff}-i}_{\rm TS} + \tilde{\mathbf{W}}_{\rm TS}^i, \forall i,	
\end{align}
where $\mathbf{R}_{\rm TS}^i = \left[ \mathbf{r}_{{\rm TS},1}^{i}, \mathbf{r}_{{\rm TS},2}^{i} ,\dots, \mathbf{r}_{{\rm TS},P^s}^{i}  \right] \in \mathbb{C}^{G \times P^s}$, $\tilde{\mathbf{H}}^i_{\rm TS} = \left[ \tilde{\mathbf{h}}_{{\rm TS},1}^{{\rm eff}-i}, \tilde{\mathbf{h}}_{{\rm TS},2}^{{\rm eff}-i}, \dots, \tilde{\mathbf{h}}_{{\rm TS},P^s}^{{\rm eff}-i} \right] \in \mathbb{C}^{KL \times P^s}$, and $\tilde{\mathbf{W}}_{\rm TS}^i = \left[   \tilde{\mathbf{w}}_{{\rm TS},1}^{i}, \tilde{\mathbf{w}}_{{\rm TS},2}^{i}, \dots, \tilde{\mathbf{w}}_{{\rm TS},P^s}^{i} \right]  \in \mathbb{C}^{G \times P^s}$. 
Moreover, by stacking the received signals from multiple adjacent TSs (i.e., different superscript $i$), we can further obtain
\begin{align}\label{eq:muti-slots}  
	\mathbf{R}_{\rm TS} = \mathbf{\Psi} \tilde{\mathbf{H}}_{\rm TS}^{{\rm eff}} + \tilde{\mathbf{W}}_{\rm TS},	
\end{align}
where  $\mathbf{R}_{\rm TS} = \left[ \mathbf{R}_{\rm TS}^{(1)},\mathbf{R}_{\rm TS}^{(2)},\dots,\mathbf{R}_{\rm TS}^{(N+1)} \right]  \in \mathbb{C}^{G \times P^s(N+1)} $, $\tilde{\mathbf{H}}_{\rm TS}^{\rm eff} = \left[ \tilde{\mathbf{H}}^{{\rm eff}-(1)}_{\rm TS},\tilde{\mathbf{H}}^{{\rm eff}-(2)}_{\rm TS},\dots,\tilde{\mathbf{H}}^{{\rm eff}-(N+1)}_{\rm TS} \right]  \\ \in \mathbb{C}^{KL \times P^s(N+1)}$, and
$\tilde{\mathbf{W}}_{\rm TS} = \left[ \tilde{\mathbf{W}}_{\rm TS}^{(1)},\tilde{\mathbf{W}}_{\rm TS}^{(2)}, \dots,  \tilde{\mathbf{W}}_{\rm TS}^{(N+1)}\right]  \in \mathbb{C}^{G \times P^s(N+1)}$. 
The common sparsity pattern of $\tilde{\mathbf{H}}_{\rm TS}^{{\rm eff}}$ is illustrated in Fig. \ref{fig:common sparsity}.

\begin{figure}[t]	
	\centering
	\includegraphics[width=\columnwidth, keepaspectratio]{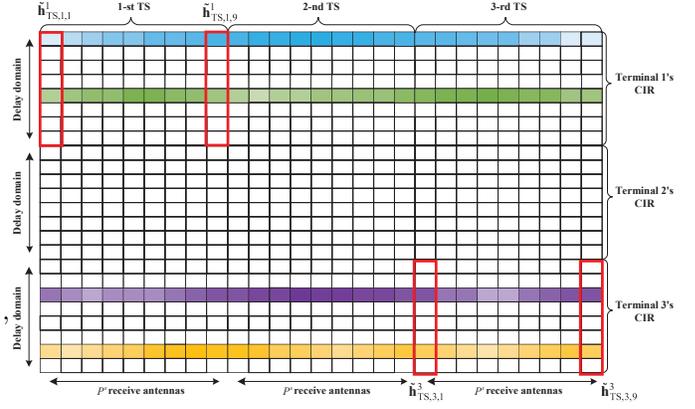}
	\captionsetup{font={footnotesize,color = {black}}, justification = raggedright,labelsep=period}
	\caption{The illustration of the common sparsity of $\tilde{\mathbf{H}}_{\rm TS}^{{\rm eff}}$ resulting from the spatial and temporal correlations in the TSLs. As an example, we assume that $K=3$ potential terminals access one LEO satellite, which is equipped with a $(P_x^s,P_y^s) = (3,3)$ UPA. Besides, the dimension of Doppler domain for OTFS waveform is $N=2$, and the maximum RToA and MPCs' delay are $L-1=7$.}
	\label{fig:common sparsity}	
\end{figure}

\subsection{Joint ATI and Coarse CE Based on MMV-CS Theory}\label{S4.2} 

Generally speaking, the dimension of non-ISI region $G$ is expected to be as small as possible to reduce the TSs overhead, so that $G$ could be usually far smaller than the dimension of $\tilde{\mathbf{H}}_{\rm TS}^{\rm eff}$. 
Nevertheless, based on Eq. (\ref{eq:muti-slots}), estimating the high-dimensional $\tilde{\mathbf{H}}_{\rm TS}^{\rm eff}$ from the low-dimensional non-ISI region is difficult, and conventional LS and linear minimum mean square error (LMMSE) estimators would fail. 
Fortunately, the CS theory has proved that high-dimension signals can be accurately reconstructed by low-dimensional uncorrelated observations if the target signal is sparse or approximately sparse \cite{CS}. 
For the joint sparse signal recovery of Eq. (\ref{eq:muti-slots}), various signal recovery algorithms have been developed, which aim to exploit the inherent common sparstiy to jointly recover a set of sparse vectors for enhanced performance \cite{CS}. 
  
We propose to utilize the simultaneous orthogonal matching pursuit (SOMP) algorithm \cite{SOMP} for fully exploiting the spatial-temporal joint sparsity of the CIR and the sparse traffic behavior of terrestrial IoT terminals with relatively low computational complexity, which is listed in the stage 1 part of \textbf{Algorithm \ref{alg1}}. Specifically, line 3-line 7 heuristically find the most correlated atom in each iteration by calculating the correlation coefficients in step 3 and augment the support set of non-zero elements in line 4. According to the current support set, the locally optimal solution is calculated in line 5. Then the residual is updated in line 6 for the next iteration until the stop condition meets.

The estimated support set of $\hat{\tilde{\mathbf{H}}}_{\rm TS}^{\rm eff}$ is denoted as $\mathcal{I}$ and the individual index of support set divided for each IoT terminal can be denoted as   
$\Omega_{k} = \{ \omega_{k}^q |\omega_{k}^q\in \mathcal{I}, (k-1)L \le \omega_{k}^q <  kL \}$, where $\omega_{k}^q$ is the $q$-th $(1 \le q \le |\Omega_{k}|_c)$ element of the set $\Omega_{k}$.
On the basis of the estimated support set and CIR vectors, a channel gain-based activity identificator \cite{KML} is proposed for ATI as follows
\begin{align}\label{eq:UI}
\hat{\alpha}_k = \left\{ \begin{array}{ll}
	    1, &  \frac{1}{P(N+1)}\sum_{p}\sum_{l=(k-1)L+1}^{kL}| \hat{\tilde{\mathbf{H}}}_{{\rm TS}_{\left[ l,p \right]}}^{\rm eff} |^2 \ge \xi \\
	    0, &  \frac{1}{P(N+1)}\sum_{p}\sum_{l=(k-1)L+1}^{kL}| \hat{\tilde{\mathbf{H}}}_{{\rm TS}_{\left[l,p\right]}}^{\rm eff} |^2 < \xi \\
\end{array} \right.,
\end{align}
where $\xi = \beta\max\{ \frac{1}{P(N+1)}\sum_{p}\sum_{l=(k-1)L+1}^{kL}| \hat{\tilde{\mathbf{H}}}_{{\rm TS}_{\left[l,p\right]}}^{\rm eff} |^2, \forall k \}$ and $\beta=0.1$ is the threshold factor\footnote{If the channel gain of the $k$-th IoT terminal is decided to be above the threshold $\xi$, the $k$-th IoT terminal is declared to be active. And $\beta=0.1$ is an empirical value to minimize the identification error probability in Eq. (\ref{eq45}).}.
As a result, the ATS can be represented by $\hat{\mathcal{A}} = \{k|\hat\alpha_k = 1, 1 \le k \le K\}$ and the cardinality of $\hat{\mathcal{A}}$ is denoted as $\hat{K}_a = |\hat{\mathcal{A}}|_c$. 

%It's noteworthy that although the real propagation delay is single-valued for  each terminal-receive antenna pair under the LoS satellite communication environment, the estimated propagation delay can be muti-valued due to the wrong selection of support set in SOMP. 

\subsection{CE Refinement with Parametric Approach}\label{S4.3}
\SetAlgoNoLine
\SetAlFnt{\small}
\SetAlCapFnt{\normalsize}
\SetAlCapNameFnt{\normalsize}
\begin{algorithm}[tb]
%	\algsetup{linenosize=\tiny} 
	\renewcommand{\algorithmicrequire}{\textbf{Input:}}
	\renewcommand{\algorithmicensure}{\textbf{Output:}}
	\caption{Proposed two-stage successive ATI and CE}
	\label{alg1} % Alg 1
	\begin{algorithmic}[1]
		\REQUIRE
		Measurement signals $\mathbf{R}_{\rm TS}$ and sensing matrix $\mathbf{\Psi}$.\\
		
		\ENSURE 
		Estimated activity indicator $\hat{\alpha}_k, \forall k$, 
		the ATS $\hat{\mathcal{A}}$, and the correspoding CIR $\hat{h}_{k,p}[\kappa,\ell],
		\hat{h}_{k,p}^{\rm DD}[\ell,\upsilon], k \in \hat{\mathcal{A}}, \forall p$;
		
		\textbf{Stage 1 (Joint ATI and coarse CE)}
		
		\STATE Initialize $t=1$, the residual $\mathbf{R}_0 = \mathbf{R}_{\rm TS}$, the index of support set $\mathcal{I} = \emptyset$, and define $T_{\rm max}$ as the maximum number of iterations and termination threshold $\pi_{\rm th}$;
		
		\REPEAT
		
		\STATE 
		$i^* = \mathop{\arg\max}_{i=0,\dots,KL-1} \sum_{k=1}^{(N+1)P} \left \vert \langle [\mathbf{R}_{t-1}]_{:,k},\boldsymbol{\psi}_i \rangle \right \vert$;
		
		\STATE 
		$\mathcal{I} = \mathcal{I} \cup \{ i^* \}$;
		
		\STATE 
		$\hat{\mathbf{H}}_{\rm TS}^{\rm temp}  = \boldsymbol{\Psi}^{\dagger}_{[:,\mathcal{I}]}  \mathbf{R}_{\rm TS} $;
		
		\STATE 
		$\mathbf{R}_{t} = \mathbf{R}_{\rm TS} - \boldsymbol{\Psi}_{[:,\mathcal{I}]} \hat{\mathbf{H}}_{\rm TS}^{\rm temp} $;
		
		\STATE $t=t+1$;
		
		\UNTIL{  $T > T_{\rm max}$ or $\Vert \mathbf{R}_t \Vert_{\rm F}^2 < PG(N+1)\pi_{\rm th}$ } 	
			
		\STATE 
		$\hat{\tilde{\mathbf{H}}}_{{\rm TS}_{[\mathcal{I}, :]}}^{\rm eff} = \hat{\mathbf{H}}_{\rm TS}^{\rm temp}$; 
		
		\STATE Compute the estimated indicator according to Eq. (\ref{eq:UI}) and obtain the ATS $\hat{\mathcal{A}}$;
		
		\textbf{Stage 2 (CE refinement)}

		\FOR {$k \in \hat{\mathcal{A}}$}
		
		\STATE Compute the estimate of the RToA and MPCs' delay according to Eq. (\ref{eq:delay estimation});
		
		\STATE {Estimate the Doppler shift according to Eq. (\ref{eq:start})-(\ref{eq:end});}

		\STATE Compute the effective channel coefficients according to Eq. (\ref{eq:geff}); 
		
		\STATE Refine the results of CE by reconstructing CIR according to Eq. (\ref{eq:CE1}) and Eq. (\ref{eq:CE2});		
		\ENDFOR			
	\end{algorithmic}
\end{algorithm}

%As a matter of fact, it can be observed from (\ref{eq:CIR}) that both of the LoS and NLoS paths'coefficients change continuously with time,  and thus coefficients of the non-zero elements of CIR vectors are time-dependent as well.
%Therefore, the above estimated CIR vectors can be approximately treated as sampled values corresponding to $N+1$ particular instants. To capture the fluctuation of channel coefficients more precisely and refine the performance of CE,
%we intend to estimate the the parameters fully characterizing TSL channel in (\ref{eq:CIR}) and reconstruct the CIR for each instant in accordance with (\ref{eq:discrete CIR})  in the duration of one TS-OTFS frame with the estimated  $\hat{\mathcal{A}}$ and $\hat{\tilde{\mathbf{H}}}_{\rm TS}$.

From the above discussion in Section \ref{s} and the channel model in Eq. (\ref{eq:eff-DD}), it can be observed that the separability, stability, and sparsity of the DD domain channels maintain in the TSLs, which motivates us to leverage the parametric approach to acquire the accurate estimation of the DD domain channel parameters and further refine the CE results.     
Specifically, the remanent RToA among different terminals and MPCs' delay for each terminal's TSL can be acquired from the index of support set of $\hat{\tilde{\mathbf{H}}}_{\rm TS}^{\rm eff}$ as\footnote{Here, we no longer distinguish LoS and NLoS paths, and uniformly treat them as MPCs with different subscript $q$.} 
\begin{align}\label{eq:delay estimation}
	\hat{\ell}_{k}^q = \omega_{k}^q-(k-1)L, \, k \in \hat{\mathcal{A}}, \, 1 \le q \le |\Omega_k|_c.	
\end{align}
Besides, the acquired $N+1$ sampled values of the time-varying CIR from $N+1$ adjacent TSs
can facilitate the super-resolution estimation of the Doppler shift. 
Specifically, we can use the one-dimensional estimating signal parameters via rotational invariance techniques (ESPRIT) algorithm \cite{ESPRIT}, which is a class of harmonic analysis algorithms by exploiting the underlying rotational invariance among signal subspaces. 

For the convenience of the following Doppler estimation, we define a path-gain  matrix $\hat\Upsilon_{k}^{q^*} \in \mathbb{C}^{(N+1) \times P^s}$ for the MPC with maximum energy as follows
\begin{align}\label{eq:effective CIR matrix}
	\hat\Upsilon_{k}^{q^*} = \mathrm{mat} \left( \hat{\tilde{\mathbf{H}}}_{{\rm TS}_{[\omega_{k}^{q^*},:]}}^{\rm eff}
	, P^s, N+1 \right)^{\rm T}, \forall k,
\end{align}
where $q^* = \mathop{\arg\max}_{q}\Vert { \hat{\tilde{\mathbf{H}}  }^{\rm eff}_{{\rm TS}_{[\omega_{k}^{q},:]}}} \Vert_2^2$.
In fact, each column vector of  $\hat\Upsilon_{k}^{q^*}$ is composed of CIR associated with 
multiple TSs, and its different column vectors originate from different receive antennas.
For the CIR of different TSs, they are continuous observation samples for the same channel.
While for the CIR of different receive antennas, the Doppler shift is approximately identical and thus they can be regarded as multiple snapshots to mitigate the effects of noise.
The main steps of Doppler estimation based on the ESPRIT algorithm are detailed as follows.

First of all, we divide two subarrays for each snapshot, which consist of CIR from the first $N$ TSs and the last $N$ TSs, respectively, as 
\begin{align} \label{eq:start}
	\mathbf{x}_{k,p}^1 =  \hat{\Upsilon}^{q^*}_{{k}_{ \left[ 1:N,p \right] }},
	\mathbf{x}_{k,p}^2 =  \hat{\Upsilon}^{q^*}_{{k}_{\left[ 2:N+1,p \right] }}, \forall k,p,
\end{align}	
and their combination $\mathbf{x}_{k,p} = \left[ \mathbf{x}_{k,p}^{1^{\rm T}} \; \mathbf{x}_{k,p}^{2^{\rm T}} \right]^{\rm T} \in  \mathbb{C}^{2N \times 1}$. In the presence of noise, the low rank property of autocorrelation matrix
\begin{align}
	{\mathbf{R}}_{xx}^k = E \left[\mathbf{x}_{k,p} \mathbf{x}_{k,p}^{\rm H} \right] \approx \frac{1}{P}\sum_{p=1}^{P}  \mathbf{x}_{k,p} \mathbf{x}_{k,p}^{\rm H},
\end{align} 
is destroyed. To mitigate the impact of noise, the eigenvalue decomposition (EVD) is utilized to distinguish the signal subspace and noise subspace, and we take the minimum eigenvalue $\hat{\sigma}^2_k$ as the estimate of the noise's variance. As a result, the noise cancelled autocorrelation matrix $\hat{\mathbf{R}}_{xx}^k$ can be calculated as 
\begin{align}
	\hat{\mathbf{R}}_{xx}^k = \mathbf{R}_{xx}^k-\hat{\sigma}^2_k \mathbf{I},
\end{align}
Then, the subspace of subarray $\mathbf{x}_{k,p}^1$ and $\mathbf{x}_{k,p}^2$ can be obtained 
by performing EVD on $\hat{\mathbf{R}}_{xx}^k$  as
\begin{align}
	\hat{\mathbf{R}}_{xx}^k =  \hat{\mathbf{U}}_k \hat{\mathbf{\Sigma}}_k \hat{\mathbf{U}}^{\rm H}_k,
\end{align}
and the first column of the eigenvector matrix  $\hat{\mathbf{U}}_k^s$ can approximate their  dominant signal subspace, i.e.,   $\mathbf{e}_{k}^1 = \hat{\mathbf{U}}_{k_{[1:N,1]}}^s$,  
$\mathbf{e}_k^2 = \hat{\mathbf{U}}_{k_{[N+1:2N,1]}}^s$.
In fact, $\mathbf{e}_{k}^1$ and $\mathbf{e}_{k}^2$ are characterized by rotational invariance \cite{ESPRIT}. Therefore, based on the LS criterion, the estimated Doppler shift can be calculated by
\begin{align} \label{eq:end}
	\hat\upsilon_{k} = \frac{N}{2\pi} \arg({{\mathbf{e}^1}^{\dagger}_{k}}\mathbf{e}_{k}^2).
\end{align}
Before proceeding with the estimation of channel coefficients of the MPCs, we introduce a lemma as follows.
\begin{lemma} \label{lemma1}	
We assume that the support set of $\hat{\tilde{\mathbf{H}}}_{\rm TS}^{\rm eff}$ is estimated perfectly. The non-zero elements of the recoverd sparse CIR vector associated with the $i$-th TS and $p$-th receive antenna are defined as $\hat{\mathbf{h}}^{{\rm eff-nz},i}_{{\rm TS}, p} =  \hat{\tilde{\mathbf{H}}}_{{\rm TS}_{[{\mathcal{I}},p+(i-1)P^s]}}^{\rm eff} \in \mathbb{C}^{Q \times 1}$, 
and the effective channel coefficients of the LoS and NLoS paths are denoted as 
\begin{align}\label{eq:geff-L}
	g_{k,p}^{\rm eff-LoS} = \sqrt{\frac{\gamma_k P_k}{\gamma_k+1}} g_k^{\rm LoS}  g_k^{\rm ABF-LoS}
	\left[ \mathbf{v}_R (\theta_{k}^{\rm zen}, \theta_{k}^{\rm azi}) \right]_p,\\
	g_{k,p}^{{\rm eff}-q} = \sqrt{\frac{P_k}{\gamma_k+1}} g_k^q g_k^{{\rm ABF}-q}  \left[ \mathbf{v}_R (\theta_{k}^{\rm zen}, \theta_{k}^{\rm azi}) \right]_p.
\end{align}
Their relationship can be written as      
\begin{align}\label{eq:g}
	\hat{\mathbf{h}}^{{\rm eff-nz},i}_{{\rm TS}, p}  = \boldsymbol{\Psi}_{[:,\mathcal{I}]}^{\dagger}  \mathbf{\Gamma}  \boldsymbol{\eta}^{i-1} \odot \mathbf{g}_{p}^{\rm eff} + \boldsymbol{\Psi}_{[:,\mathcal{I}]}^{\dagger} \mathbf{w}_{{\rm TS},p}^{i}, 
\end{align}
where $\mathbf{\Gamma}$ and $\boldsymbol{\eta}^{i-1}$are parameterized by the TSs, Doppler shift, RToA, and MPCs' delay, and $\mathbf{g}^{\rm eff}_p$ collects the effective channel coefficients of all active terminals associated with the $p$-th receive antenna. Their specific expression can refer to the Appendix.
\end{lemma}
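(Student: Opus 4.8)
The plan is to trace the \emph{exact} (un-approximated) input–output relation for one non-ISI region back through the one-sparsity of each per-path channel response, collect the per-path contributions into the claimed factored form, and only then apply the stage-1 least-squares projection. First I would start from Eq.~(\ref{eq:vec IO}), keeping the diagonal Doppler matrices $\mathbf{\Delta}_k^{\rm LoS}$ and $\mathbf{\Delta}_k^{q}$ intact — note that this lemma, unlike Section~\ref{S4.1}, does \emph{not} invoke the $\mathbf{\Delta}\!\approx\!\mathbf{I}$ approximation — and merge the LoS and NLoS terms into a single sum over multipath components indexed by $q$, following the convention of the footnote preceding Eq.~(\ref{eq:delay estimation}). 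The next step exploits that each per-path vector $\mathbf{h}_{{\rm TS},k,p}^{{\rm eff}-i,q}$ has a single non-zero entry, located at the delay index $\ell_k^q$; hence $\mathbf{\Psi}_k\mathbf{h}_{{\rm TS},k,p}^{{\rm eff}-i,q}$ collapses to the $\ell_k^q$-th column of $\mathbf{\Psi}_k$ — i.e.\ the $\omega_k^q$-th column of the stacked matrix $\mathbf{\Psi}$ — scaled by the scalar tap value, so that $\mathbf{\Delta}_k^{q}\boldsymbol{\Psi}_{[:,\omega_k^q]}$ is the natural column of $\mathbf{\Gamma}$.

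The key step is to make the scalar tap value explicit. Using the single-Doppler-shift property of each TSL (so that path $q$ of terminal $k$ carries the common Doppler $\nu_k$) together with the fact that successive training sequences in the frame are separated by exactly one OTFS-symbol duration $T=(M+M_t)T_s$, the tap aligned with the start of the $i$-th non-ISI region equals its value at $i=1$ multiplied by $e^{j2\pi\nu_k(i-1)T}=\eta_k^{i-1}$ with $\eta_k=e^{j2\pi\upsilon_k/N}$, since $\upsilon_k=\nu_k NT$; the slow within-region Doppler drift over the $G$ samples is precisely what $\mathbf{\Delta}_k^{q}$ already carries. Absorbing the $i=1$ reference phase together with the path gain, Rician factor, analog-beamforming gain, transmit power, and the $p$-th entry of the receive steering vector into $g_{k,p}^{{\rm eff}-q}$ exactly as in Eq.~(\ref{eq:geff-L}), the scalar tap is $\eta_k^{i-1}g_{k,p}^{{\rm eff}-q}$. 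Stacking the $Q=\sum_{k\in\mathcal{A}}(Q_k+1)$ contributions column-wise then gives $\mathbf{r}_{{\rm TS},p}^{i}=\mathbf{\Gamma}\big(\boldsymbol{\eta}^{i-1}\odot\mathbf{g}^{\rm eff}_{p}\big)+\mathbf{w}_{{\rm TS},p}^{i}$, where the columns of $\mathbf{\Gamma}$ are the $\mathbf{\Delta}_k^{q}\boldsymbol{\Psi}_{[:,\omega_k^q]}$ (parameterized by the TSs, Doppler shift, RToA, and MPCs' delay), the entries of $\boldsymbol{\eta}^{i-1}$ are the $\eta_k^{i-1}$, and the entries of $\mathbf{g}^{\rm eff}_{p}$ are the $g_{k,p}^{{\rm eff}-q}$.

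Finally, since the support set is assumed to be recovered perfectly, $\mathcal{I}$ coincides with $\mathrm{supp}\{\tilde{\mathbf{H}}_{\rm TS}^{\rm eff}\}$, so no signal energy is lost by the projection; the stage-1 reconstruction in line~5 of Algorithm~\ref{alg1} then reads $\hat{\mathbf{h}}^{{\rm eff-nz},i}_{{\rm TS}, p}=\boldsymbol{\Psi}_{[:,\mathcal{I}]}^{\dagger}\mathbf{r}_{{\rm TS},p}^{i}$, and left-multiplying the displayed identity by $\boldsymbol{\Psi}_{[:,\mathcal{I}]}^{\dagger}$ yields Eq.~(\ref{eq:g}). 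I expect the main obstacle to be the bookkeeping in the middle step: pinning down the exact inter-TS phase increment (verifying the $(M+M_t)T_s$ spacing between consecutive non-ISI regions and hence $\eta_k=e^{j2\pi\upsilon_k/N}$), and cleanly separating the slow within-region Doppler rotation retained in $\mathbf{\Delta}_k^{q}$ from the TS-index progression captured by $\boldsymbol{\eta}^{i-1}$, so that no phase factor is double-counted. The remaining manipulations — the one-sparsity collapse and the column stacking — are routine, and the explicit forms of $\mathbf{\Gamma}$, $\boldsymbol{\eta}^{i-1}$, and $\mathbf{g}^{\rm eff}_{p}$ are relegated to the Appendix.
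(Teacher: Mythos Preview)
Your proposal is correct and follows essentially the same route as the paper's proof: start from the exact Eq.~(\ref{eq:vec IO}) retaining the Doppler matrices, collapse each per-path term via one-sparsity to a single column of $\boldsymbol{\Psi}_k$ times a scalar tap, make that tap explicit through the per-TS phase progression, stack into the $\mathbf{\Gamma}\,(\boldsymbol{\eta}^{i-1}\odot\mathbf{g}_p^{\rm eff})$ form, and combine with the pseudo-inverse $\boldsymbol{\Psi}_{[:,\mathcal{I}]}^{\dagger}$. The only cosmetic difference is ordering --- the paper applies $\boldsymbol{\Psi}_{[:,\mathcal{I}]}^{\dagger}$ at the outset (its Eq.~(58)) and then manipulates, whereas you factor $\mathbf{r}_{{\rm TS},p}^{i}$ first and project last; since the projection is linear this is immaterial.

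One small bookkeeping point to fix when you write it out: you propose to absorb the $i{=}1$ reference phase into $g_{k,p}^{{\rm eff}-q}$ ``exactly as in Eq.~(\ref{eq:geff-L})'', but Eq.~(\ref{eq:geff-L}) contains \emph{no} such phase --- it is purely the product of path gain, Rician weight, analog-beamforming gain, transmit power, and the $p$-th steering-vector entry. In the paper's decomposition the $i$-independent phase offset $e^{j2\pi\upsilon_k(L-\ell_k^q)/[N(M+M_t)]}$ is kept inside $\boldsymbol{\eta}_k^{i-1}$, not in $\mathbf{g}_{k,p}^{\rm eff}$, so that $\boldsymbol{\eta}_k^{i-1}$ carries both the inter-TS progression $\tfrac{i-1}{N}$ and the delay-dependent offset. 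This is exactly the bookkeeping hazard you flagged; just be sure your final split matches the stated definitions.
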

\begin{proof}
 Please refer to Appendix.	
\end{proof}

After reconstructing $\hat{\mathbf{\Gamma}}$ and $\hat{\boldsymbol{\eta}}^{i-1}$ with the estimated Doppler shift $\hat\upsilon_{k}$, RToA, and MPCs' delay $\hat{\ell}_{k}^q$, the effective channel coefficients related to the $p$-th receive antenna can be mathematically calculated in line with the {Lemma \ref{lemma1}} as
\begin{align}\label{eq:geff}
	\hat{\mathbf{g}}_{p}^{\rm eff} = \frac{1}{N+1} \sum_{i=1}^{N+1} \left[ (\boldsymbol{\Psi}_{[:,\mathcal{I}]}^{\dagger}  \hat{\mathbf{\Gamma}}  )^{-1} {\hat{\mathbf{h}}^{{\rm eff-nz},i}_{{\rm TS}, p} \odot \frac{1}{\hat{\boldsymbol{\eta}}^{i-1}} } \right].		
\end{align}	

%{\textit{Remark:} \textcolor{red}{From the {Lemma \ref{lemma1}}, it’s can be observed that
%	the approximation in (\ref{eq:approx vec}) has no impact on the accuracy of
%	the Doppler shift and the fading factor estimation but only
%	influence the selection of support set when recovering the
%	sparse CIR vectors.}}

Up to now, the dominated DD domain channel's parameters have been acquired, and the results of CE refinement can be expressed as
\begin{align}\label{eq:CE1}  
	\hat{h}_{k,p}^{\rm eff-DD}[\ell, \upsilon] = \sum_{q=1}^{|{\Omega}_{k}|_c} \hat{g}_{k,p}^{{\rm eff}-q}  \delta[\ell-\hat{\ell}_{k}^q] \delta[\upsilon-\hat{\upsilon}_{k}]  , \forall k,p,
\end{align}
where $\hat{g}_{k,p}^{{\rm eff}-q}$ is the element of $\hat{\mathbf{g}}_{p}^{\rm eff}$ related to the $q$-th path and the $k$-th terminal. Besides, the estimate of time-varying CIR can be represented by 
\begin{align}\label{eq:CE2}   
	\hat{h}_{k,p}^{\rm eff}[\kappa,\ell] = \sum_{q=1}^{|{\Omega}_{k}|_c} \hat{g}_{k,p}^{{\rm eff}-q} e^{j2\pi\frac{\hat\upsilon_{k}(\kappa-\hat{\ell}_{k}^q)}{N(M+M_t)}}  \delta[\ell-\hat{\ell}_{k}^q], \forall k,p.
\end{align}

\begin{remark}\label{R2}
	In fact, ignoring the noise term, Eq. (\ref{eq:g}) can be rewritten as
	\begin{align}
		\hat{\mathbf{h}}^{{\rm eff-nz},i}_{{\rm TS}, p} \approx \boldsymbol{\zeta} \odot \boldsymbol{\eta}(i,\{\upsilon_k\}_{k \in \mathcal{A}}), 
	\end{align}
	where $\boldsymbol{\zeta} = \boldsymbol{\Psi}_{[:,\mathcal{I}]}^{\dagger}  \mathbf{\Gamma} \mathbf{g}_{p}^{\rm eff} \in \mathbb{C}^{Q \times 1}$ is a Doppler-invariant and $i$-invariant vector, and $\mathbf{\eta}$ is a function of $i$ and Doppler shift $\{\upsilon_k\}_{k \in \mathcal{A}}$. The linear relationship between $\boldsymbol{\zeta}$ and $\boldsymbol{\eta}$ ensures that the proposed Doppler shift estimation method is immune from the uncertainty of $\boldsymbol{\zeta}$, i.e., the approximation error in Eq. (\ref{eq:approx vec}), which also guarantees the effectiveness of the subsequent channel coefficients estimation. 
\end{remark}

So far, we have finished the discussion of the proposed two-stage successive ATI and CE scheme, and its complete procedure is listed in \textbf{Algorithm \ref{alg1}}.

\subsection{Complexity Analysis}

The computational complexity of the proposed two-stage successive ATI and CE scheme in \textbf{Algorithm \ref{alg1}}
mainly depends on the following operations:

\begin{itemize}
	\item \textbf{SOMP} \textit{(line 1-9)}:
	The matrix-vector multiplications involved  in each iteration have the complexity on the order of
	$\mathcal{O} \left( 2G \left(N+1 \right)P^sKL + KL \left( N+1 \right)P^s + 2t^2G   \right.$  $\left. + t^3 + tG \left(N+1 \right)P^s \right)$, where $t$ is the iteration index.
	
	\item \textbf{Activity identificator} \textit{(line 10)}: The cost to identify the active terminals is $\mathcal{O} \left( P^s \left( N+1 \right)KL \right)$.
	
	\item  \textbf{Doppler estimation} \textit{(line 13)}: The computational complexity of Doppler estimation is $\mathcal{O} \left( \left( 2NP^s + (2N)^3 + N \right) \hat{K}_a \right)$.
	
	\item  \textbf{Effective channel coefficients estimation} \textit{(line 14)}: The computational complexity involved mainly results from LS solution, where it has the complexity on the order of 
	$\mathcal{O} \left(  P^s \left[ \left(\sum_{k \in \hat{\mathcal{A}}} 1 + \hat{Q}_k\right)^2G + \left({\sum_{k \in \hat{\mathcal{A}}} 1+\hat{Q}_k}\right)^3 +  \right. \right. $ 
	$\left.\left.  2 \left( {\sum_{k \in \hat{\mathcal{A}}} 1 + \hat{Q}_k } \right) \right]  \right)$.	
\end{itemize}

Obviously the SOMP implemented in Algorithm 1 contributes to most of the computational complexity. In comparison of the proposed algorithm, the complexity of the TDSBL-FM algorithm \cite{GF-NOMA-OTFS}
is cubic to the pilot length and quadratic to number of potential terminals and also requires a good deal of iterations to converge, which is prohibitive high for large-scale system. Meanwhile, the proposed scheme has the same order of computational complexity as the 3D-SOMP in \cite{SWQ.OTFS}, but requires fewer memory resources and computing time overhead without vectorization operation.

\color{black}
\section{Signal Detection}\label{S5}

Based on the above ATI and CE results, we develop a LS-based parallel time domain multi-user SD for detecting the OTFS signals with relatively low computational complexity in this section.

\subsection{Received Signal Preprocessing}\label{S5.1}

As illustrated in Fig.~\ref{fig3:data frame}, on one hand, the received OTFS payload data symbols over the time dispersive channels would be contaminated by trailing of the preceding TS; on the other hand, the data symbols can also contaminate the following TS part. 
These can lead to severe ISI.    
Fortunately, TSs are known by the transceiver.
With the estimated CSI, the aforementioned ISI can be eliminated to facilitate the following SD. 
The details of the preprocessing of the received signals before SD are presented as follows.   

First, Eq. (\ref{eq:IO}) can be rewritten as a  vector form as 
\begin{align}\label{eq35}
	\mathbf{r}_p = \sum_{k=1}^K  \alpha_k  \sqrt{P}_k \underbrace{\left( \mathbf{\Pi}_{k,p}^{{\rm eff-LoS}} + \sum_{q=1}^{Q_k} \mathbf{\Pi}_{k,p}^{{\rm eff}-q} \right) }_{\mathbf{\Pi}_{k,p}^{\rm eff}} \mathbf{s}_k + \mathbf{w}_p,\; \forall p,
\end{align}
where $\mathbf{\Pi}_{k,p}^{\rm eff-LoS} \, (\mathrm{also} \; \mathbf{\Pi}_{k,p}^{{\rm eff}-q}) \in \mathbb{C}^{(MN+M_tN+M_t) \times (MN+M_tN+M_t)}$ consists of elements of time-varying CIR in Eq. (\ref{eq:T-CIR}), and its $(m, n)$-th element is defined as    
\begin{align}\label{eq:channel matrix}
	 \mathbf{\Pi}^{{\rm eff-LoS}}_{k,p_{\left[ m,n \right]}} = \left\{ \begin{array}{ll}
		h_{k,p}^{\rm eff}[m-1, m-n], & m-n=\ell_k^{\rm LoS} \\
		0,  & \textrm{otherwise}\\
	\end{array} \right.,
\end{align}
and $\mathbf{\Pi}^{{\rm eff}-q}_{k,p}$ shares a similar expression.
Consequently, the ISI in the OTFS payload data symbol caused by the trailing of the preceding TS can be estimated as  
\begin{align}\label{eq:SIC}
	\hat{\mathbf{r}}_p^{\rm ISI} = \sum_{k = 1}^K \hat{\alpha}_{k}  \hat{\mathbf{\Pi}}_{k,p}^{\rm eff} \mathring{\mathbf{s}}_k, \; \forall p,
\end{align}
where $\mathring{\mathbf{s}}_k$ consists of TSs and zero sequences, i.e., $\mathring{\mathbf{s}}_k = \left[ \mathbf{c}_k^{\rm T},\mathbf{0}_{M \times 1}^{\rm T},\mathbf{c}_k^{\rm T},\mathbf{0}_{M \times 1}^{\rm T},\dots,\mathbf{c}_k^{\rm T},\mathbf{0}_{M \times 1}^{\rm T},\mathbf{c}_k^{\rm T} \right]^{\rm T}  \in \mathbb{C}^{(M_tN+MN+M_t) \times 1}$, and $\hat{\mathbf{\Pi}}_{k,p}^{\rm eff}$ is the estimate of $\mathbf{\Pi}_{k,p}^{\rm eff}$ with its elements padded by the estimated CIR in Eq. (\ref{eq:CE2}). 

Besides, to form the cyclic convolution relationship between the OTFS data signal and the CIR like traditional OFDM/OFDM-based OTFS system \cite{OFDM-based OTFS}, the data trailing (cause the ISI to the following TS) will be superposed onto the header of each OTFS data symbol region.
In fact, such a data trailing can be acquired and shifted by 
\begin{align}\label{eq:compensate}
	\hat{\mathbf{r}}_p^{\rm tra}  = \mathbf{R}_{t}^{\rm T} (\mathbf{I}_{N} \otimes \mathbf{R}_{s}) \mathbf{R}_{t} (  {\mathbf{r}}_p - {\mathbf{r}}_p^{\rm ISI}), \; \forall p, 
\end{align}	
where $\mathbf{R}_{t} = \left[ \mathbf{0}_{(M+M_t)N  \times M_t} \; \mathbf{I}_{(M+M_t)N} \right] \in \mathbb{C}^{(M+M_t)N \times (MN+M_tN+M_t)}$ and
the $(m, n)$-th element of  $\mathbf{R}_s \in \mathbb{C}^{(M+M_t) \times (M+M_t)}$ is defined as  
\begin{align}\label{eq:superposition matrix}
	 \mathbf{R}_{s_{[m,n]}}	= \left\{ \begin{array}{ll}
		1 & m-n = M, \, n \in [1,M_t]\\
		0 & \textrm{otherwise}\\
	\end{array} \right..
\end{align} 
Therefore, the  preprocessed OTFS payload data symbols $\hat{\mathbf{r}}_p \in \mathbb{C}^{MN \times 1}$ can be finally acquired after removing the TSs as 
\begin{align}\label{eq:TS removal}
	\hat{\mathbf{r}}_p  = (\mathbf{I}_{N} \otimes \mathbf{R}_{r}) \mathbf{R}_{t} ({\mathbf{r}}_p - \hat{\mathbf{r}}_p^{\rm ISI} + \hat{\mathbf{r}}_p^{\rm tra}) , \; \forall p,
\end{align}	 
where $\mathbf{R}_{r} = \left[ \mathbf{I}_{M} \; \mathbf{0}_{M \times M_t} \right] \in \mathbb{C}^{M \times (M+M_t)}$.

%\begin{algorithm}[t]
%	\renewcommand{\algorithmicrequire}{\textbf{Input:}}
%	\renewcommand{\algorithmicensure}{\textbf{Output:}}
%	\caption{The proposed SD scheme.}
%	\label{alg2} % Alg 1
%	\begin{algorithmic}[1]
%		
%		\REQUIRE~\\
%		1) The result of aforementioned ATICE; \\
%		2) The received signals $\mathbf{r}_p, \forall p$; 
%		3) The TS's $\mathbf{c}_k, k \in \hat{\mathcal A}$;
%		
%		\ENSURE	~~\\
%		1) The detected signal $\hat{\tilde{\mathbf{s}}}^i, \forall i$;
%		\FOR {$1 \le p \le P$}
%		\STATE Perform IC based on (\ref{eq:SIC}); 
%		
%		\STATE Compensate the data tail by (\ref{eq:compensate});
%		
%		\STATE Remove the TS's overhead by (\ref{eq:TS removal});
%		\ENDFOR
%		
%		\FOR {$1 \le i \le N$}
%		
%		\STATE Perform time domain SD based on (\ref{eq:final IO}) with LSQR;
%		
%		\ENDFOR
%		
%		
%		
%	\end{algorithmic}
%\end{algorithm}

\subsection{LS-Based Parallel Time Domain Multi-User SD}\label{S5.2}
  
It has been shown that with the fractional Doppler shift, the TF domain and DD domain effective channel matrices could be not very sparse due to the Doppler spreading with the limited Doppler resolution, while the sparsity of time domain channel remains to hold \cite{yjh}. 
Therefore, we are motivated to perform multi-user SD in the time domain to exploit its sparse  pattern for lower computational complexity.
In fact, based on Eq. (\ref{eq35}) and Eq. (\ref{eq:SIC}), we have  
\begin{align}\label{eq:key}
	& \mathbf{r}_p -  \mathbf{r}_p^{\rm ISI}   =   \sum_{k \in  \hat{\mathcal{A} }} \hat{\mathbf{\Pi}}_{k,p}^{\rm eff} \left( \mathbf{s}_k - \mathring{\mathbf{s}}_k \right) + 
	\sum_{k \in {  \hat{\mathcal{A}} }} \left(\sqrt{P}_k {\mathbf{\Pi}}_{k,p}^{\rm eff} - \hat{\mathbf{\Pi}}_{k,p}^{\rm eff} \right) \mathbf{s}_k \notag
	\\ & \quad\quad  + \sum_{k \in ({ \mathcal{A} - \hat{\mathcal{A}}) }} \sqrt{P}_k { \mathbf{\Pi}}_{k,p}^{\rm eff} \mathbf{s}_k  + \mathbf{w}_p 
	 =  \sum_{k \in { \hat{\mathcal{A}} }} \hat{\mathbf{\Pi}}_{k,p}^{\rm eff} \left( \mathbf{s}_k - \mathring{\mathbf{s}}_k \right) + \hat{\mathbf{w}}_p,
\end{align}
where $\hat{\mathbf{w}}_p =  \sum_{k \in {   \hat{\mathcal{A}} }} 
\left( \sqrt{P_k} {\mathbf{\Pi}}_{k,p}^{\rm eff} - \hat{\mathbf{\Pi}}_{k,p}^{\rm eff} \right) \mathbf{s}_k + \sum_{k \in ({ \mathcal{A} - \hat{\mathcal{A}}) }} \sqrt{P_k} {\mathbf{\Pi}}_{k,p}^{\rm eff} \mathbf{s}_k + \mathbf{w}_p$ is the effective noise vector including errors in the signal preprocessing and AWGN.
Furthermore, by stacking Eq. (\ref{eq:compensate}), Eq. (\ref{eq:TS removal}), and Eq. (\ref{eq:key}), we have 
\begin{align}\label{eq: IO}
	\hat{\mathbf{r}}_p = & \sum_{k \in  \hat{\mathcal{A}}} 
	\left[  \mathbf{I}_N \otimes \mathbf{R}_r (\mathbf{I}_{M+M_t}+\mathbf{R}_s)   \right] \mathbf{R}_t \left(  \mathbf{r}_p -  \hat{\mathbf{r}}_p^{\rm ISI} \right)  \nonumber \\
	= & \sum_{k \in  \hat{\mathcal{A}} } 
	\left[  \mathbf{I}_N \otimes \mathbf{R}_r (\mathbf{I}_{M+M_t}+\mathbf{R}_s)   \right] \mathbf{R}_t \hat{\mathbf{\Pi}}_{k,p}^{\rm eff} \mathbf{R}_t^{\rm T} ( \mathbf{I}_N \otimes \mathbf{A}_t)   \tilde{\mathbf{s}}_k \notag \\
	& \quad\quad + \left[  \mathbf{I}_N \otimes \mathbf{R}_r    (\mathbf{I}_{M+M_t}+\mathbf{R}_s)   \right] \mathbf{R}_t \hat{\mathbf{w}}_p,
\end{align}
where $\mathbf{A}_{t} = \left[ \mathbf{0}_{M \times M_t}^{\rm T} \; \mathbf{I}_{M}^{\rm T} \right]^{\rm T} \in \mathbb{C}^{(M+M_t) \times M}$.

With the aid of the TSs and the preprocessing aforementioned, the ISI between  adjacent  OTFS data symbols can be avoided. 
Hence, the SD in the time domain can be performed in parallel for $N$ OTFS payload data symbols, which can significantly reduce the computational complexity. 
As a result, Eq. (\ref{eq: IO}) can be further decomposed into  
\begin{align} \label{eq:single IO}
	\hat{\mathbf{r}}_p^i & = \sum_{k \in {\hat{\mathcal{A}}}} \underbrace{ \mathbf{R}_{r}  ( \mathbf{I}_{M+M_t} +\mathbf{R}_{s} ) \hat{\mathbf{\Pi}}_{k,p}^{{\rm eff}-i}  \mathbf{A}_t }_{\hat{\mathbf{U}}_{p,k}^i} \tilde{{\mathbf{s}}}_k^i + \hat{\mathbf{w}}_p^{i},\, \forall i,p,
\end{align}
where  $\hat{\mathbf{\Pi}}_{k,p}^{{\rm eff}-i} = \left(\mathbf{R}_t \hat{\mathbf{\Pi}}_{k,p}^{\rm eff} \mathbf{R}_{t}^{T} \right)_{[(i-1)(M+M_t)+1:i(M+M_t)]} \in \mathbb{C}^{(M+M_t) \times (M+M_t)}$, $\hat{\mathbf{r}}_p^i = \hat{\mathbf{r}}_{p_{[(i-1)M+1:iM]}} \in \mathbb{C}^{M \times 1}$,
and $\hat{\mathbf{w}}_p^{i} \in \mathbb{C}^{M \times 1}$ is the corresponding noise vector.

Moreover, we intend to extend Eq. (\ref{eq:single IO}) to jointly process the received signal from $P^s$ receive antennas as
\begin{align}\label{eq:final IO}
	\hat{\mathbf{r}}^i = \hat{\mathbf{U}}^i \tilde{\mathbf{s}}^{i} + \hat{\mathbf{w}}^{i}, \, \forall i,
\end{align}
where $\hat{\mathbf{U}}^i \in \mathbb{C}^{P^sM \times \hat{K}_aM}$ is a block matrix and its $(p,k)$-th submatrix equals to $\hat{\mathbf{U}}^i_{p,k}$,
$\hat{\mathbf{r}}^i = \left[ \hat{\mathbf{r}}_1^{i^{\rm T}},\hat{\mathbf{r}}_2^{i^{\rm T}},\dots,\hat{\mathbf{r}}_{P^s}^{i^{\rm T}} \right]^{\rm T} \in \mathbb{C}^{MP^s \times 1}$, 
$\tilde{\mathbf{s}}^i = \left[ \tilde{\mathbf{s}}_{k_1}^{i^{\rm T}},
\tilde{\mathbf{s}}_{k_2}^{i^{\rm T}},\dots, \right.$ $\left. \tilde{\mathbf{s}}_{k_{\hat{K}_a}}^{i^{\rm T}} \right]^{\rm T} \in \mathbb{C}^{M \hat{K}_a \times 1} $, $k_1,k_2,\dots,k_{\hat{K}_a}$ are the elements of the set $\hat{\mathcal{A}}$, and $\hat{\mathbf{w}}^{i}$ denotes the noise vectors of different receive antennas.

Therefore, given $P \ge \hat{K}_a$, the time domain OTFS signals of different active terminals can be detected by calculating the LS solution of Eq. (\ref{eq:final IO}).
Benefitting from the sparsity of TSLs, $\hat{\mathbf{U}}^i$ displays favorable sparse pattern, where we can further utilize iterative method, such as LS QR-factorization (LSQR) \cite{lsqr}, to facilitate the approximate solution of sparse linear equations with low  computational complexity.
In this case, the computational complexity of Eq. (\ref{eq:final IO}) solution dramatically reduces from $\mathcal{O} \left(  2P^sM^3 \hat{K}_a^2 + \hat{K}_a^3 M^3 + \hat{K}_a M  \right)$ to $\mathcal{O} \left( 2P^sMQT \right)$, where $T$ is the required number of iterations for LSQR algorithm.

\color{black}
\section{Performance Evaluation}\label{S6}

\subsection{Simulation Setup}

In this section, we carry out extensive simulation investigations to verify
the effectiveness of our proposed scheme under different parameter configurations, and compare it with the state-of-the-art solutions.  
First of all, we define the identification error probability $P_e$ for ATI as
\begin{align}\label{eq45}
P_e =  \sum_{k=1}^{K} |\hat{\alpha}_k-\alpha_k|. 	
\end{align}
Besides, the normalized mean square error (NMSE) for CE is considered as
\begin{align}\label{eq46} 
\mathrm{NMSE} = \frac{\sum_{k=1}^K \sum_{p=1}^P ||\hat{\alpha}_k \hat{\mathbf{\Pi}}_{k,p}^{\rm eff} - \alpha_k \sqrt{P_k} \mathbf{\Pi}_{k,p}^{\rm eff} ||_{\rm F}^2}{\sum_{k=1}^K \sum_{p=1}^P ||\alpha_k \sqrt{P}_k \mathbf{\Pi}_{k,p}^{\rm eff} ||_{\rm F}^2},	
\end{align}	
and the uncoded bit error rate (BER) for SD is considered as
\begin{align}\label{eq47} 
\mathrm{BER} = \frac{E_aNMM_b+B_a}{K_aNMM_b}, 	
\end{align}
where $E_a$ is the number of falsely identified terminals and $M_b$ is the modulation order. 
Besides, $B_a$ is the total error bits for the DD domain payload data of the correctly identified active terminals, i.e., 
error bits between $\hat{\mathbf{X}}_k^{\rm DD}$ and ${\mathbf{X}}_k^{\rm DD}, k \in (\mathcal{A} \cap \hat{\mathcal{A}})$.

\begin{table}[]
	\centering
	\caption{Simulation parameters}
	\label{tab:my-table2}
	\resizebox{0.9\columnwidth}{!}{%
		\begin{tabular}{|c|c|c|}
			\hline
			\textbf{Contents}                & \textbf{Parameters}                                                                                        & \textbf{Values}                \\ \hline
			\multirow{8}{*}{\textbf{System}} & Carrier frequency                                                                                          & 10 GHz                         \\ \cline{2-3} 
			& Subcarrier spacing                                                                                         & 480 KHz                        \\ \cline{2-3} 
			& Bandwidth                                                                                                  & 122.88 MHz                     \\ \cline{2-3} 
			& OTFS data size $(M,N)$                                                                                     & (256,8)                        \\ \cline{2-3} 
			& Modulation scheme                                                                                          & QPSK                           \\ \cline{2-3} 
			& Satellite's UPA $(P_x^s, P_y^s)$                                                                           & $(32,32)$                        \\ \cline{2-3} 
			& Terminals' UPA $(P_x^t, P_y^t)$                                                                            & $(32,32)$                        \\ \cline{2-3} 
			& Angular spacing $(\Delta_z, \Delta_a)$                                                                     & $(14.4^{\circ},14.3^{\circ})$  \\ \hline
			\multirow{8}{*}{\textbf{TSL}}    & Orbit altitude of LEO satellite                                                                            & 500 km                         \\ \cline{2-3} 
			& Velocity of LEO satellite                                                                                  & 7.58 km/s                      \\ \cline{2-3} 
			& Velocity of terminals                                                                                      & $0 \sim 10$ m/s                \\ \cline{2-3} 
			& Service coverage radius                                                                                    & 494.8 km                       \\ \cline{2-3} 
			& Zenith angle $\theta_k^{\rm zen}$                                                                          & $[-44.7^{\circ},44.7^{\circ}]$ \\ \cline{2-3} 
			& Azimuth angle $\theta_k^{\rm azi}$                                                                         & $[0,360^{\circ})$              \\ \cline{2-3} 
			& \begin{tabular}[c]{@{}c@{}}Remant RToA $\tau_k^{\rm LoS}$ and \\  MPC's delay $\tau_k^q$ range\end{tabular} & $0 \sim 0.52 \, \mu \rm s$              \\ \cline{2-3} 
			& Doppler shift $\nu_k^{\rm LoS} \, (\nu_k^q)$ range                                                       & $0 \sim 178.2 \, \rm{KHz}$     \\ \hline
		\end{tabular}%
	}
\end{table}

For the massive MIMO based massive connectivity in IoT, the performance of payload data demodulation highly depends on the channel statistics of the simultaneously served terminals. 
As for the LEO satellite-based IoT, the channel characteristics are mainly determined by the simultaneously served terminals' AoA observed at the satellite receiver, i.e., $\theta_k^{\rm zen}$ and $\theta_k^{\rm azi}, \forall k \in \mathcal{A}$ \cite{You.LEO}. 
Therefore, channels of IoT terminals with minor AoA differences would be highly correlated, which inevitably leads the multi-user MIMO channel matrix to be ill-conditioned with considerably performance deterioration if they are allocated with the same TF or DD resources. 
To this end, we adopt a user retransmission strategy to eliminate interference in the presence of correlated channels. Specifically, when the LEO satellite identifies that the zenith and azimuth angles of terminals meet 
\begin{align}
	|\theta_{k_1}^{\rm zen} - \theta_{k_2}^{\rm zen}| < \Delta_z, \;
	|\theta_{k_1}^{\rm azi} - \theta_{k_2}^{\rm azi}| < \Delta_a, 
\end{align}
it broadcasts retransmission scheduling signaling for one of the
collided terminals $k^* = k_1 \ (k_2)$ to retransmit its signal, 
where $k_1, k_2 \in \mathcal{A}$ and $k_1 \neq k_2$, $\Delta_z$ and $\Delta_e$ are the preset minimum zenith and azimuth spacing to avoid the MIMO channel matrix to be ill-conditioned \cite{You.LEO}. 
After receiving the retransmission signal $\mathbf{r}_p^2, \forall p$,
it is necessary for the LEO satellite to reestimate the CSI with the TSs, since the channel state may change or channel resource allocation could vary in the retansmission stage. 
Then, the LEO satellite could demodulate the payload data of the retransmission signal with the estimated CSI as $\hat{\tilde{\mathbf{S}}}_{k^*}$ as Eq. (\ref{eq:SPC}).
In this way, the LEO satellite can eliminate the interference resulting from terminal $k^*$ for the first stage received signal $\mathbf{r}_p$ as 
\begin{align}
	\mathbf{r}_p^{1-2} = \mathbf{r}_p - \hat{\boldsymbol{\Pi}}_{k^*,p}^{\rm eff} \hat{\mathbf{s}}_{k^*},
\end{align}
where $\hat{\boldsymbol{\Pi}}_{k^*,p}^{\rm eff}$ is the estimated CSI of terminal $k^*$ in the first stage, $\hat{\mathbf{s}}_{k^*}  = \left[ \mathbf{c}_{k^*}^{\rm T}, \tilde{\mathbf{s}}_{k^*}^{1^{\rm T}}  , \mathbf{c}_{k^*}^{\rm T}  , \tilde{\mathbf{s}}_{k^*}^{2^{\rm T}} , \dots , \mathbf{c}_{k^*}^{\rm T}  , \right. \\ \left. \tilde{\mathbf{s}}_{k^*}^{N^{\rm T}}  ,   \mathbf{c}_{k^*}^{\rm T}   \right]^{\rm T} \in \mathbb{C}^{(M_tN+MN+M_t) \times 1}$ is the reconstructed transmit signal of terminal $k^*$.
Following the procedures in Eq. (\ref{eq35})-(\ref{eq:final IO}), the satellite can demodulate the payload data for other terminals with $\mathbf{r}_p^{1-2}$.

\color{black}
Additionally, to meet the mutual coherence property (MCP) \cite{CS} of the sensing matrix for reliable recovery of sparse vectors in Eq. (\ref{eq:muti-slots}), we assume the time domain TS associated with the $k$-th terminal is generated from a standard complex Gaussian distribution, i.e., $\mathbf{c}_k \sim \mathcal{CN}(\mathbf{0}_{M_t \times 1}, \mathbf{I}_{M_t})$.
The system topology is set as Fig. \ref{fig1:system model}. 
Within the coverage radius of $494.8 \, \rm km$, the number of potential and active IoT terminals served by one LEO satellite is fixed as $K=100$ and $K_a=10$, respectively.
The zenith and azimuth angles of terminals are randomly distributed in the range of $[-44.7^\circ, 44.7^\circ]$ and $[0^\circ, 360^\circ)$, and their locations can be simulated accordingly. Besides, their velocity is randomly distributed in the range of $0 \sim 10$ m/s.
Other detailed system parameters for the following simulations are summarized in Table \ref{tab:my-table2}. 	
Meanwhile, the maximum number of iterations $T_{\rm max}$ in \textbf{Algorithm \ref{alg1}} is set to 30, and the termination threshold is set as $\pi_{\rm th} = 1.05\sigma_w^2$, where the variance $\sigma_w^2$ of AWGN at the receiver is assumed to be the \textit{a prior} information for the LEO satellite.

%\textcolor{red}{Since the typical value of the sampling time $T_s$ in the delay domain is usually sufficiently small, the impact of fractional delays in typical broadband systems can be neglected, while the fractional part of normalized Doppler can not be neglected due to large value of the sampling time $T$ in the Doppler domain \cite{yjh}.}

\begin{table}[t]
	\centering
	\captionsetup{font={color = {black}}, justification = raggedright,labelsep=period}
	\caption{Link budget for single terminal \cite{link budget 1}}
	\label{tab:my-table3}
	\resizebox{0.9\columnwidth}{!}{%
		\begin{tabular}{cccc}
			\hline
			\multicolumn{1}{|c|}{}                                      & \multicolumn{1}{c|}{\textbf{Case 1}} & \multicolumn{1}{c|}{\textbf{Case 2}} & \multicolumn{1}{c|}{\textbf{Case 3}} \\ \hline
			\multicolumn{1}{|c|}{Terminal zenith angle}                 & \multicolumn{1}{c|}{$0^{\circ}$}     & \multicolumn{1}{c|}{$25^{\circ}$}    & \multicolumn{1}{c|}{$44.7^{\circ}$}  \\ \hline
			\multicolumn{1}{|c|}{Frequency [GHz]}                       & \multicolumn{1}{c|}{10}              & \multicolumn{1}{c|}{10}              & \multicolumn{1}{c|}{10}              \\ \hline
			\multicolumn{1}{|c|}{Bandwidth [MHz]}                       & \multicolumn{1}{c|}{122.88 \cite{embb}}          & \multicolumn{1}{c|}{122.88 \cite{embb}}          & \multicolumn{1}{c|}{122.88 \cite{embb}}          \\ \hline
			\multicolumn{1}{|c|}{Tx transmit power [dBm]}               & \multicolumn{1}{c|}{40}              & \multicolumn{1}{c|}{40}              & \multicolumn{1}{c|}{40}              \\ \hline
			\multicolumn{1}{|c|}{Tx beamforming gain [dB]}              & \multicolumn{1}{c|}{40}              & \multicolumn{1}{c|}{40}              & \multicolumn{1}{c|}{40}              \\ \hline
			\multicolumn{1}{|c|}{Rx $G_r/T$ [dB/K]}                       & \multicolumn{1}{c|}{-4.62 \cite{link budget 2}}           & \multicolumn{1}{c|}{-4.62 \cite{link budget 2}}          & \multicolumn{1}{c|}{-4.62 \cite{link budget 2}}           \\ \hline
			\multicolumn{1}{|c|}{Free space path loss [dB]}             & \multicolumn{1}{c|}{167.25}          & \multicolumn{1}{c|}{168.10}          & \multicolumn{1}{c|}{170.21}          \\ \hline
			\multicolumn{1}{|c|}{Atmospheric loss [dB]}                 & \multicolumn{1}{c|}{0.07}            & \multicolumn{1}{c|}{0.07}            & \multicolumn{1}{c|}{0.07}            \\ \hline
			\multicolumn{1}{|c|}{Shadowing margin [dB]}                 & \multicolumn{1}{c|}{3}               & \multicolumn{1}{c|}{3}               & \multicolumn{1}{c|}{3}               \\ \hline
			\multicolumn{1}{|c|}{Scintillation loss [dB]}               & \multicolumn{1}{c|}{2.2}             & \multicolumn{1}{c|}{2.2}             & \multicolumn{1}{c|}{2.2}             \\ \hline
			\multicolumn{1}{|c|}{Polarization loss [dB]}                & \multicolumn{1}{c|}{0}               & \multicolumn{1}{c|}{0}               & \multicolumn{1}{c|}{0}               \\ \hline
			\multicolumn{1}{|c|}{Additional losses [dB]}                & \multicolumn{1}{c|}{0}               & \multicolumn{1}{c|}{0}               & \multicolumn{1}{c|}{0}               \\ \hline
			\multicolumn{1}{|c|}{Additional margin [dB]}                & \multicolumn{1}{c|}{6}               & \multicolumn{1}{c|}{6}               & \multicolumn{1}{c|}{6}               \\ \hline
			\multicolumn{1}{l}{}                                        & \multicolumn{1}{l}{}                 & \multicolumn{1}{l}{}                 & \multicolumn{1}{l}{}                 \\ \hline
			\multicolumn{1}{|c|}{\textbf{SNR for single terminal [dB]}} & \multicolumn{1}{c|}{14.59}            & \multicolumn{1}{c|}{13.73}            & \multicolumn{1}{c|}{11.62}            \\ \hline
		\end{tabular}%
	}
\end{table}

\subsection{Link budget}

According to the general formula for link budget derived from \cite{NB-IoT}, which includes all the gains and losses in the propagation medium from transmitter to receiver, the received signal-to-noise ratio (SNR) for single terminal case can be computed as follows 
\begin{align}\label{eq:snr}
	&	\textrm{SNR}_{\rm SU}\textrm{[dB]}  = P_k \textrm{[dBW]} + G_t\textrm{[dB]} + G_r/T\textrm{[dB]} \notag \\ 
	& \quad\quad\quad - k_{B} \textrm{[dBW/K/Hz]} 
		 - \textrm{PL}_{\rm FS} \textrm{[dB]}   
		- \textrm{PL}_{A} \textrm{[dB]} - \textrm{PL}_{S} \textrm{[dB]} \notag \\ 
	& \quad\quad\quad - \textrm{PL}_{\rm AD} \textrm{[dB]} 
		 -\textrm{PL}_{\rm MA} \textrm{[dB]} - 10\log_{10}(B_w{[\rm Hz]}),  
\end{align}
where $G_t$ is the beamforming gain for the terminal, $G_r$ and $T$ are the receive antenna gain and the noise temperature, respectively, and $k_B = -228.6 \ {\rm dBW/K/Hz}$ is the Boltzmann constant, $\textrm{PL}_{\rm FS}$ represents the free space propagation loss, $\textrm{PL}_{A}$ corresponds to atmospheric gas losses, $\textrm{PL}_{S}$ is a
shadowing margin, $\textrm{PL}_{\rm AD}$ denotes some underlying additional
losses due to the scintillation phenomena, $\textrm{PL}_{\rm MA}$ is the additional reserved margin, and $B_w$ is the channel bandwidth. 
TABLE \ref{tab:my-table3} illustrates the link budget for different configurations in the uplink RA. 
Unless otherwise mentioned, the transmit power is set to $P_k = 40 \ \rm dBm$ for all terminals in the following simulations.  

\begin{figure*}[tp]
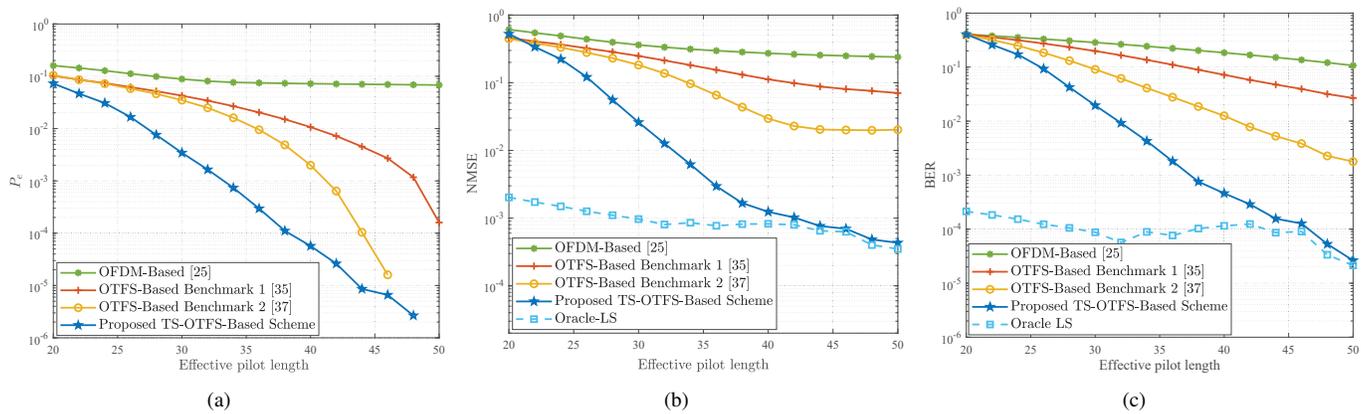

	\centering
	\subfigure[]{	
		\includegraphics[width=0.66\columnwidth, keepaspectratio]{fig6/AER.pdf}}	
	\subfigure[]{	
		\includegraphics[width=0.66\columnwidth, keepaspectratio]{fig6/NMSE.pdf}}	
	\subfigure[]{	
		\includegraphics[width=0.66\columnwidth, keepaspectratio]{fig6/BER.pdf}}	
	\caption{Performance comparison under different pilot length: (a) $P_e$; (b) NMSE; (c) BER.}
	\label{fig:non-ISI size}
\end{figure*}  

\begin{figure*}[tp]
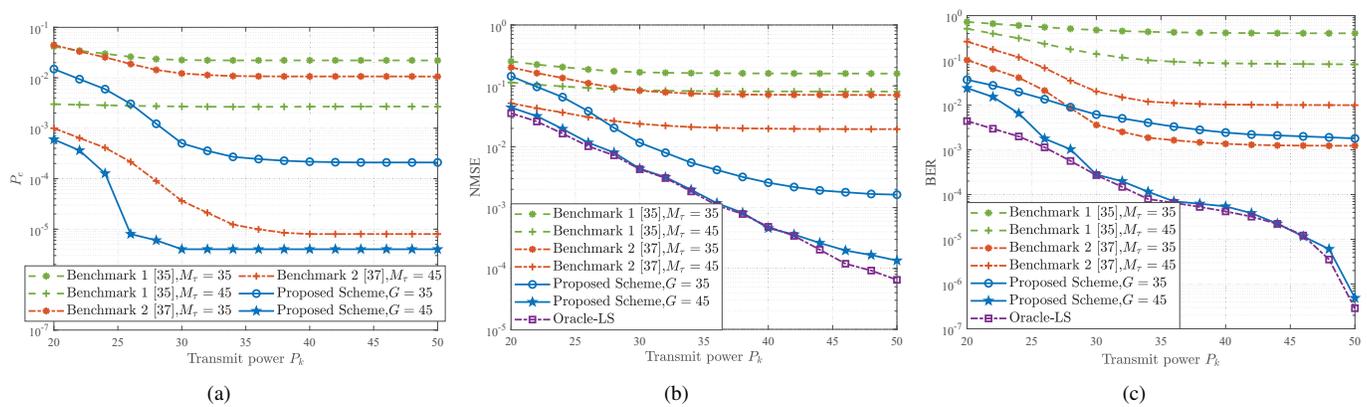

	\centering	
	\subfigure[]{	
		\includegraphics[width=0.66\columnwidth, keepaspectratio]{fig7/AER.pdf}}	
	\subfigure[]{	
		\includegraphics[width=0.66\columnwidth, keepaspectratio]{fig7/NMSE.pdf}}	
	\subfigure[]{	
		\includegraphics[width=0.66\columnwidth, keepaspectratio]{fig7/BER.pdf}}	
	\caption{Performance comparison under different transmit power $P_k$: (a) $P_e$; (b) NMSE; (c) BER.}
	\label{fig:SNR}
\end{figure*}

\subsection{Performance under LoS TSL}  

First of all, we investigate the performance of ATI, CE, and SD under LoS TSL. 
As a fledgling concept, there has been little work dedicated to the field of ATI, CE, and multi-user SD in the framework of GF-NOMA-OTFS. 
{We take one of the most representative schemes proposed in \cite{SWQ.OTFS,GF-NOMA-OTFS} as the Benchmark 1 for comparison, which embeds the guard and non-orthogonal pilot symbols in the DD domain to facilitate uplink ATI and CE by exploring the sparsity of channel in the delay-Doppler-angle domain.}
The size of embedded DD domain pilots along the Doppler dimension and the delay dimension are denoted as $N_{\nu}$ and $M_{\tau}$, respectively, and $N$ is fixed as $N=N_{\nu}$. 
Besides, the size of embedded DD domain guard symbols along the Doppler dimension and the delay dimension, which are utilized to eliminate ISI, is set as $N_{g}=0$ and $M_{g}=L-1$, respectively [34, Fig.~34]. 
{The problem formulation and the adopted 3D-SOMP algorithm for comparison can be referred to \cite{SWQ.OTFS,GF-NOMA-OTFS}, respectively.}
Furthermore, we set Benchmark 2, where the virtual sampling grid in the DD space \cite{OTFS-fractional Doppler} is attached for the Benchmark 1 for further comparison, and its virtual sampling grid size in the Doppler domain is fixed as $N^{\prime} = 2N, M^{\prime} = M$.
In additional, the GF-NOMA scheme employing OFDM waveform without Doppler compensation, is considered for comparison as well \cite{KML}.  
Finally, the oracle-LS estimator with known ATS is used as performance upper bound.

\renewcommand{\arraystretch}{1.5}
\begin{table}[]
	\centering
	\caption{Transmission efficiency comparison between the benchmark and the proposed
		scheme}
	\label{tab:my-table4}
	\resizebox{\columnwidth}{!}{%
		\begin{tabular}{|c|cccc|cccc|}
			\hline
			\textbf{Schemes} & \multicolumn{4}{c|}{\textbf{Benchmark 1 and 2}} & \multicolumn{4}{c|}{\textbf{Proposed}} \\ \hline
			\multirow{2}{*}{\textbf{\begin{tabular}[c]{@{}c@{}}Cyclic prefix\\ (ISI region)\end{tabular}}} & \multicolumn{4}{c|}{$N(L-1)$} & \multicolumn{4}{c|}{$(N+1)(L-1)$} \\ \cline{2-9} 
			& \multicolumn{1}{c|}{256} & \multicolumn{1}{c|}{256} & \multicolumn{1}{c|}{256} & 256 & \multicolumn{1}{c|}{288} & \multicolumn{1}{c|}{288} & \multicolumn{1}{c|}{288} & 288 \\ \hline
			\multirow{2}{*}{\textbf{\begin{tabular}[c]{@{}c@{}}Guard \\ Interval\end{tabular}}} & \multicolumn{4}{c|}{$2N(L-1)$} & \multicolumn{4}{c|}{\multirow{2}{*}{\diagbox{\quad\quad\quad\quad}{}}} \\ \cline{2-5}
			& \multicolumn{1}{c|}{512} & \multicolumn{1}{c|}{512} & \multicolumn{1}{c|}{512} & 512 & \multicolumn{4}{c|}{} \\ \hline
			\multirow{2}{*}{\textbf{\begin{tabular}[c]{@{}c@{}}Effective \\ Pilot\end{tabular}}} & \multicolumn{4}{c|}{$M_{\tau}N_{\nu}$ ($N_{\nu}=N$)} & \multicolumn{4}{c|}{$G(N+1)$} \\ \cline{2-9} 
			& \multicolumn{1}{c|}{160} & \multicolumn{1}{c|}{240} & \multicolumn{1}{c|}{320} & 400 & \multicolumn{1}{c|}{180} & \multicolumn{1}{c|}{270} & \multicolumn{1}{c|}{360} & 450 \\ \hline
			\multirow{2}{*}{\textbf{Frame size}} & \multicolumn{4}{c|}{$(M+L-1)N$} & \multicolumn{4}{c|}{$M_t(N+1)+MN$} \\ \cline{2-9} 
			& \multicolumn{1}{c|}{2304} & \multicolumn{1}{c|}{2304} & \multicolumn{1}{c|}{2304} & 2304 & \multicolumn{1}{c|}{2516} & \multicolumn{1}{c|}{2606} & \multicolumn{1}{c|}{2696} & 2786 \\ \hline
			\multirow{2}{*}{\textbf{\begin{tabular}[c]{@{}c@{}}Transmission \\ efficeny\end{tabular}}} & \multicolumn{4}{c|}{$\frac{MN-2N(L-1)-M_{\tau}N_{\nu}}{(M+L-1)N}$} & \multicolumn{4}{c|}{$\frac{M(M+L-1)N^2}{[M_t(N+1)+MN]^2}$} \\ \cline{2-9} 
			& \multicolumn{1}{c|}{59.72\%} & \multicolumn{1}{c|}{56.25\%} & \multicolumn{1}{c|}{52.78\%} & 49.31\% & \multicolumn{1}{c|}{\textbf{74.54\%}} & \multicolumn{1}{c|}{\textbf{69.48\%}} & \multicolumn{1}{c|}{\textbf{64.92\%}} & \textbf{60.79\%} \\ \hline
		\end{tabular}%
	}
\end{table}

\begin{figure*}[tp]
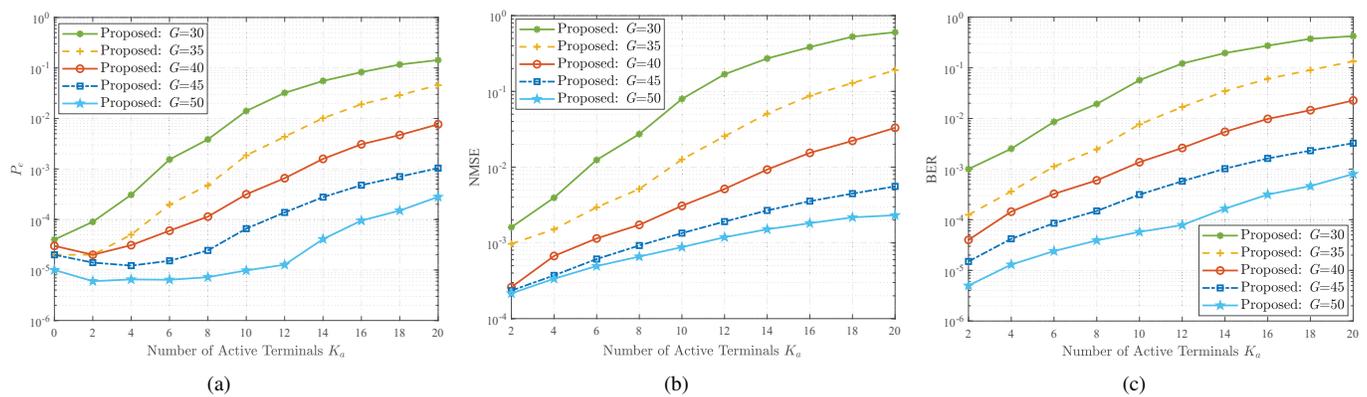

	\centering
	\subfigure[]{	
		\includegraphics[width=0.66\columnwidth, keepaspectratio]{fig8/AER.pdf}}
	\subfigure[]{	
		\includegraphics[width=0.66\columnwidth, keepaspectratio]{fig8/NMSE.pdf}}
	\subfigure[]{	
		\includegraphics[width=0.66\columnwidth, keepaspectratio]{fig8/BER.pdf}}	
	\caption{Performance under different number of active terminals $K_a$: (a) $P_e$; (b) NMSE; (c) BER.}
	\label{fig:K_a}
\end{figure*}
\begin{figure*}[tp]
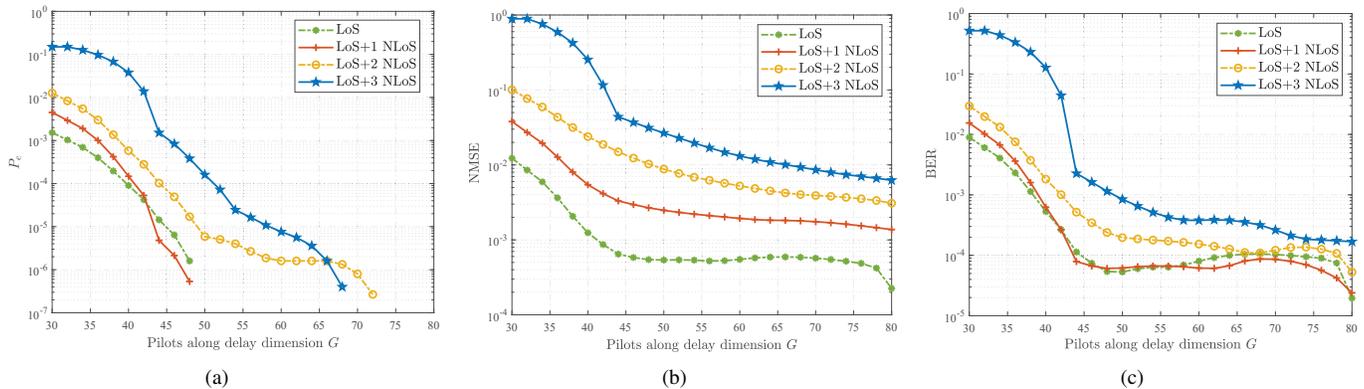

	\centering	
	\subfigure[]{	
		\includegraphics[width=0.66\columnwidth, keepaspectratio]{fig9/AER.pdf}}	
	\subfigure[]{	
		\includegraphics[width=0.66\columnwidth, keepaspectratio]{fig9/NMSE.pdf}}	
	\subfigure[]{	
		\includegraphics[width=0.66\columnwidth, keepaspectratio]{fig9/BER.pdf}}	
	\caption{Performance under different channel conditions: (a) $P_e$; (b) NMSE; (c) BER.}
	\label{fig:NLoS}
\end{figure*}

Fig. \ref{fig:non-ISI size} provides $P_e$, NMSE, and BER performance under effective different pilot length,
where the effective pilot length is defined as follows: non-ISI region dimension $G$ for proposed scheme, DD domain pilots along delay dimension $M_{\tau}$ for Benchmark 1 and 2 \cite{SWQ.OTFS}, and time slots dimension $G$ occupied by pilots for GF-NOMA scheme employing OFDM waveform \cite{KML}. 
It can be observed from Fig. \ref{fig:non-ISI size} that, GF-NOMA scheme employing OFDM waveform suffers serious performance degradation confronted with severe Doppler effect when complicated compensation technique is absent. 
In contrast, the GF-NOMA-OTFS paradigms enjoy performance superiority owing to their Doppler robustness provided by OTFS.
It's noteworthy that the proposed scheme can further provide noticable performance gain over the Benchmark 1.   
To figure out the rationality behind this phenomenon, the result of Benchmark 2 is presented.
As Fig. \ref{fig:non-ISI size} exhibits, the superiority of Benchmark 2 over Benchmark 1 is self-evident, and it indicates that it's the low-resolution of Doppler domain that severely holds back the performance of Benchmark 1, especially when the small size of Doppler dimension $N$ is adopted.
However, oversize $N$ is prohibitive in the LEO satellite system for the intolerable computational complexity and signal processing latency. 
And more importantly, the quasi-static property of TSLs in the DD domain could be destroyed as $N$ increases. 
Therefore, the proposed scheme with the Doppler domain super-resolution enabled by the time domain TSs and parametric CE refinement is rewarding in this kind of harsh channel conditions.
Beisides, the NMSE and BER performance of the proposed method is very close to oracle-LS when the dimension of effective pilot overheads $G \ge 40$, which manifests that the approximation error of Eq. (\ref{eq:approx vec}) only leads to a slight increase of TSs overhead to ensure the performance of sparse signal recovery.
Meanwhile, the indisputable superiority of the proposed method even with lower TSs overhead demonstrates that the impact of approximation error on the following CE refinement is negligible in contrast to the low-resolution of Doppler domain of the Benchmark 1 and Benchmark 2.

Furthermore, to clearly present the percentage of the reduced pilot
overhead compared to Benchmark, we compare the transmission efficiency between the proposed scheme with Benchmark 1 and 2 as shown in Table \ref{tab:my-table4}, which is defined as the percentage of data symbols in the whole data frame. 
Therefore, we can conclude that the proposed scheme can reduce the pilot overhead  while achieving better performance.

Fig. \ref{fig:SNR} exhibits $P_e$, NMSE, and BER performance under different transmit power $P_k$.
It can be observed that the proposed scheme can achieve better $P_e$, NMSE, and BER performance while keeping the pilot overhead to a low level in the almost whole regime of transmit power (20-50 dBm) in contrast to Benchmark 1 and Benchmark 2. 
This can be interpreted that in the range of low transmit power, namely low SNR, the effective utilization of both the spatial and temporal correlations in the TSLs considerably promotes the accuracy of sparse signal recovery, and as a result, our proposed scheme outperforms the benchmarks. 
Moreover, in the range of high transmit power, namely high SNR, the BER performance is mainly dominated by the CE performance. 
In spite of the approximation error, our proposed scheme overcomes the problem of low-resolution in the Doppler domain and achieves a more satisfactory performance.

Moreover, since the traffic of mMTC is sporadic, the number of active terminals could be continuously varying. Besides, the number of active terminals $K_a$ is likely to get larger in massive MTC.  
In order to show the applicability of our proposed scheme in various IoT applications, we investigate $P_e$, NMSE, and BER performance under different number of active terminals $K_a$. The numerical results are illustrated in Fig. \ref{fig:K_a}.
On the one hand, when the number of active terminals $K_a=0$, $P_e$ degrades into the false alarming probability resulting from noise since there is no signal sent by the IoT terminals. 
On the other hand, when $K_a >0$, the performance of ATI, CE, and SD exhibit a similar deteriorating trend with an increasing number of active IoT terminals trying to access the LEO satellite in the same DD resources. 
Nevertheless, when appropriate TS overheads are employed, it could support a wide range of active terminals with tolerable performance losses. For instance, when $G = 50$, the performance of $P_e$, NMSE, and BER could still hold the superiority to $-30 \, \rm dB$, $-20 \, \rm dB$, and $-30 \, \rm dB$, respectively, when the number of active terminal varies from 2 to 20.

\subsection{Performance under Different Channel Conditions}

To further demonstrate the robustness of the proposed method, we investigate its performance under different TSL conditions.  
Fig. \ref{fig:NLoS} displays $P_e$, NMSE, and BER performance with the variation of MPCs, while the Rician factor is fixed at $\gamma_k=8 \, \mathrm{dB}, \forall k$. 
It can be observed that with the increase of MPCs, there is a slight rise of the effective pilot overheads to guarantee constant performance. 
This can be interpreted that the increase of MPCs leads to more observations to recover the increasing non-zero elements of sparse CIR vectors. 
In fact, despite the fact that the performance of NMSE deteriorates at a relatively rapid rate, the performance of $P_e$ and BER degrades sluggishly. 
It verifies the system performance is mainly determined by the accuracy of estimation of the LoS path and those low-energy NLoS paths have negligible impact on the system performance.
Besides, it is noteworthy that the increase of MPCs could contribute to the enhancement of ATI, which could be treated as a diversity gain.

\section{Conclusion}\label{S7}

This paper investigates an effective RA paradigm for accommodating massive IoT access based on LEO satellites. 
Specifically, we first propose to apply the GF-NOMA-OTFS scheme to LEO satellite-based IoT for mitigating the access scheduling overheads and latency, and combating the severe Doppler effect of TSLs. 
On this basis, to handle the challenging problem of ATI, CE, and SD, we further develop a TS-OTFS transmission scheme and a two-stage successive ATI and CE method. 
At the first stage, the time domain TSs facilitate us to leverage the traffic sparsity of IoT terminals and the sparse CIR to jointly perform ATI and coarse CE. 
Furthermore, a parametric approach is introduced to refine the CE performance based on the sparsity of TSLs in the DD domain.   
With the results of ATI and CE, we are further motivated to propose a time-domain parallel multi-user SD with relatively low computational complexity to circumvent the channel spreading in the DD or TF domain.
Simulation results demonstrate the effectiveness and superiority of our proposed paradigm particularly for LEO satellite-based massive access.

\begin{figure*}
	\begin{align}
		\hat{\mathbf{h}}^{{\rm eff-nz},i}_{{\rm TS}, p} 
		& = \boldsymbol{\Psi}_{[:, \mathcal{I}]}^{\dagger}   
		\sum_{k \in \mathcal{A}}  \sqrt{P_k}  \left(\mathbf{\Delta}_k^{\rm LoS} \boldsymbol{\psi}_{k}^{\rm LoS} h_{{\rm TS},k,p}^{{\rm eff}-i}(\ell_{k}^{\rm LoS}+1)  +
		\sum_{q=1}^{Q_k} \mathbf{\Delta}_k^{q}  \boldsymbol{\psi}_{k}^{q}  {h}_{{\rm TS},k,p}^{{\rm eff}-i}(\ell_k^q+1) \right)  + 
		\underbrace {\boldsymbol{\Psi}_{[:, \mathcal{I}]}^{\dagger} \mathbf{w}_{{\rm TS},p}^{i}}_{\mathrm{noise}}, \label{eq:long1} \tag{59}	\\	
		& \approx \boldsymbol{\Psi}_{[:, \mathcal{I}]}^{\dagger} \sum_{k \in \mathcal{A}}  \sqrt{P_k} \left[ \mathbf{\Delta}_k^{\rm LoS} \boldsymbol{\psi}_{k}^{\rm LoS}, \mathbf{\Delta}_k^{1}  \boldsymbol{\psi}_{k}^{1}, \dots , \mathbf{\Delta}_k^{Q_k}  \boldsymbol{\psi}_{k}^{Q_k} \right] \times 
		\left[ h_{{\rm TS},k,p}^{{\rm eff}-i}(\ell_{k}^{\rm LoS}+1) , {h}_{{\rm TS},k,p}^{{\rm eff}-i}(\ell_k^1+1)  , \dots , {h}_{{\rm TS},k,p}^{{\rm eff}-i}(\ell_k^{Q_k}+1) \right]^{\rm T}, \label{eq:long2} \tag{60}  \\			
		& \approx \boldsymbol{\Psi}_{[:, \mathcal{I}]}^{\dagger} \sum_{k \in \mathcal{A}} \sqrt{P_k}
		\left[ \mathbf{\Delta}_k^{\rm LoS} \boldsymbol{\psi}_{k}^{\rm LoS}, \mathbf{\Delta}_k^{1}  \boldsymbol{\psi}_{k}^{1}, \dots , \mathbf{\Delta}_k^{Q_k}  \boldsymbol{\psi}_{k}^{Q_k} \right] \times \nonumber \\  			
		& \quad\quad  \left[ {g_{k,p}^{\rm eff-LoS}}  e^{j2\pi\upsilon_k^{\rm LoS} \left[ \frac{i-1}{N}
			+ \frac{(L-\ell_k^{\rm LoS})}{N(M+M_t)} \right] }, 
		{g_{k,p}^{\rm eff-1}} e^{j2\pi\upsilon_k^{\rm 1} \left[ \frac{i-1}{N}
			+ \frac{(L-\ell_k^1)}{N(M+M_t)} \right] }, \dots,
		{g_{k,p}^{{\rm eff}-Q_k}}  e^{j2\pi\upsilon_k^{Q_k} \left[ \frac{i-1}{N}
			+ \frac{(L-\ell_k^{Q_k})}{N(M+M_t)} \right] }  
		\right]^{\rm T}, \label{eq:long3} \tag{61} \\	 
		& \approx \boldsymbol{\Psi}_{[:,\mathcal{I}]}^{\dagger} \sum_{k \in \mathcal{A}} \sqrt{P_k} 
		\underbrace{ \left[ \mathbf{\Delta}_k^{\rm LoS} \boldsymbol{\psi}_{k}^{\rm LoS}, \mathbf{\Delta}_k^{1}  \boldsymbol{\psi}_{k}^{1}, \dots , \mathbf{\Delta}_k^{Q_k}  \boldsymbol{\psi}_{k}^{Q_k} \right] }_{\mathbf{\Gamma}_k} \underbrace{\left[ g_{k,p}^{\rm eff-LoS},g_{k,p}^{\rm eff-1},\dots,g_{k,p}^{{\rm eff}-Q_k} \right]^{\rm T}}_{\mathbf{g}_{k,p}^{\rm eff}} \nonumber	\\ 
		& \quad\quad\quad\quad\quad\quad\quad\quad\quad  
		\odot \underbrace{ \left[
			e^{j2\pi\upsilon_k^{\rm LoS} \left[ \frac{i-1}{N}
				+ \frac{(L-\ell_k^{\rm LoS})}{N(M+M_t)} \right] }  , e^{j2\pi\upsilon_k^{\rm 1} \left[ \frac{i-1}{N}
				+ \frac{(L-\ell_k^1)}{N(M+M_t)} \right] } ,\dots,	e^{j2\pi\upsilon_k^{Q_k} \left[ \frac{i-1}{N}
				+ \frac{(L-\ell_k^{Q_k})}{N(M+M_t)} \right] }  \right]^{\rm T} }_{\boldsymbol{\eta}_k^{i-1}}.	\label{eq:long4} \tag{62}  
	\end{align}
	\hrulefill
\end{figure*}

\renewcommand{\appendixname}{Appendix\\Proof of Lemma 1}
\appendix

In fact, under the assumption that the support set of $\hat{\tilde{\mathbf{H}}}_{\rm TS}^{\rm eff}$ is perfectly recovered, the non-zero elements $\hat{\mathbf{h}}^{{\rm eff-nz},i}_{{\rm TS}, p}$ of  $\hat{\tilde{\mathbf{H}}}_{{\rm TS}_{[:,p+(i-1)P]}}^{\rm eff}$ can be derived from
\begin{align}\label{eq:SOMP}
	\hat{\mathbf{h}}^{{\rm eff-nz},i}_{{\rm TS}, p}  = \boldsymbol{\Psi}_{[:,\mathcal{I}]}^{\dagger}  \mathbf{r}_{{\rm TS},p}^{i}. \tag{58}	
\end{align}
From Eq. (\ref{eq:vec IO}), it can be further written as Eq. (\ref{eq:long1}), 
where $\boldsymbol{\psi}_k^{\rm LoS} = \boldsymbol{\Psi}_{k_{[:,\ell_k^{\rm LoS}+1]}}$ and $\boldsymbol{\psi}_k^q = \boldsymbol{\Psi}_{k_{[:,\ell_k^q+1]}}$. 
Ignoring the noise term, (\ref{eq:long1}) can be further approximate to the vector form as Eq. (\ref{eq:long2}).
According to the CIR model in Eq. (\ref{eq:T-CIR}) and  Eq. (\ref{eq:geff-L}), Eq. (\ref{eq:long2}) can be further expressed as Eq. (\ref{eq:long3}).
Finally, by extracting the effective channel coefficients,  Eq. (\ref{eq:long3}) can be decomposed into Eq. (\ref{eq:long4}). 
\setcounter{equation}{62}
Therefore, the mathematical relationship between $\hat{\mathbf{h}}^{{\rm eff-nz},i}_{{\rm TS}, p}$ and the effective channel coefficients can be represented by
\begin{align} \label{eq:48}
	\hat{\mathbf{h}}^{{\rm eff-nz},i}_{{\rm TS}, p}  = \boldsymbol{\Psi}_{[:,\mathcal{I}]}^{\dagger} \sum_{k \in \mathcal{A}}
	 \mathbf{\Gamma}_k \boldsymbol{\eta}_k^{i-1} \odot \mathbf{g}_{k,p}^{\rm eff} + \boldsymbol{\Psi}_{[:,\mathcal{I}]}^{\dagger} \mathbf{w}_{{\rm TS},p}^{i}. 
\end{align}
Furthermore,  by collecting the vectors and matrices with different subscripts $k$, Eq. (\ref{eq:48}) can be vectorized to 
\begin{align} \label{eq:fading factor}
	\hat{\mathbf{h}}^{{\rm eff-nz},i}_{{\rm TS}, p}  = \boldsymbol{\Psi}_{[:,\mathcal{I}]}^{\dagger}  \mathbf{\Gamma}  \boldsymbol{\eta}^{i-1} \odot \mathbf{g}_{p}^{\rm eff} + \boldsymbol{\Psi}_{[:,\mathcal{I}]}^{\dagger} \mathbf{w}_{{\rm TS},p}^{i}, 
\end{align}
where $\mathbf{\Gamma} = \left[  \mathbf{\Gamma}_{k_1},\mathbf{\Gamma}_{k_2},\dots,\mathbf{\Gamma}_{K_a} \right] \in \mathbb{C}^{G \times Q}$, $\boldsymbol{\eta}^{i-1} = \left[ \boldsymbol{\eta}^{i-1^{\rm T}}_{k_1},\boldsymbol{\eta}^{i-1^{\rm T}}_{k_2},\dots,\boldsymbol{\eta}^{i-1^{\rm T}}_{k_{K_a}} \right]^{\rm T} \in \mathbb{C}^{Q \times 1}$, and $\mathbf{g}^{\rm eff}_p = \left[ \mathbf{g}^{\rm eff^T}_{k_1,p},\mathbf{g}^{\rm eff^T}_{k_2,p},\dots,\mathbf{g}^{\rm eff^T}_{k_{K_a},p} \right]^{\rm T}  \in \mathbb{C}^{Q \times 1}$ with $k_1,k_1,\dots,k_{K_a} \in \mathcal{A}$.

It's clear that $\hat{\mathbf{h}}^{{\rm eff-nz},i}_{{\rm TS}, p}$ and $\mathbf{g}_{p}^{\rm eff}$ have linear relationship. 
Since $\mathbf{\Gamma}$ and $\boldsymbol{\eta}^{i-1}$ can be reconstructed with the estimated Doppler shift, RToA and MPCs' delay, $\mathbf{g}_{p}^{\rm eff}$ can be calculated mathematically based on the LS criterion according to Eq. (\ref{eq:fading factor}) as well. 
This completes the proof of {Lemma \ref{lemma1}}.

\begin{IEEEbiography}[{\includegraphics[width=1in,height=1.25in,clip,keepaspectratio]{fig_bio/Xingyu_Zhou.jpg}}]{Xingyu Zhou}
	received the B.S. degree from the School of Information and Electronics, Beijing Institute of Technology, Beijing, China, in 2021, where he is currently
	pursuing the M.S. degree.
	His research interests include space-air-ground-sea integrated networks, massive access, and OTFS waveform for the next generation wireless communications.		
\end{IEEEbiography}

\begin{IEEEbiography}[{\includegraphics[width=1in,height=1.25in,clip,keepaspectratio]{fig_bio/Keke_Ying.jpg}}]{Keke Ying}
	received the B.S. degree from the School of Information and Electronics, Beijing Institute of Technology, Beijing, China, in 2020, where he is currently
	pursuing the Ph.D. degree.
	His research interests include massive MIMO systems, satellite communications, and sparse signal processing.		
\end{IEEEbiography}

\begin{IEEEbiography}[{\includegraphics[width=1in,height=1.25in,clip,keepaspectratio]{fig_bio/Zhen_Gao.jpg}}]{Zhen Gao}
	received the B.S. degree in information engineering from the Beijing Institute of Technology, Beijing, China, in 2011, and the Ph.D.
	degree in communication and signal processing with the Tsinghua National Laboratory for Information Science and Technology, Department of Electronic
	Engineering, Tsinghua University, China, in 2016.
	He is currently an Assistant Professor with the Beijing Institute of Technology. His research interests are in wireless communications, with a focus on multi-carrier modulations, multiple antenna systems, and sparse signal processing.
	He was a recipient of the IEEE Broadcast Technology Society 2016 Scott Helt Memorial Award (Best Paper), the Exemplary Reviewer of {IEEE Communication Letters} in 2016, {IET Electronics Letters} Premium Award (Best Paper) 2016, and the Young Elite Scientists Sponsorship Program (2018--2021) from China Association for Science and Technology.
\end{IEEEbiography}

\begin{IEEEbiography}[{\includegraphics[width=1in,height=1.25in,clip,keepaspectratio]{fig_bio/Yongpeng_Wu.jpg}}]{Yongpeng Wu} (Senior Member, IEEE)
	received the B.S. degree in telecommunication engineering from
	Wuhan University, Wuhan, China, in July 2007, the Ph.D. degree in communication and signal processing from the National Mobile Communications Research Laboratory, Southeast University, Nanjing, China, in November 2013.

	He is currently a Tenure-Track Associate Professor
	with the Department of Electronic Engineering,
	Shanghai Jiao Tong University, Shanghai, China. Previously,
	he was Senior Research Fellow with the Institute for Communications Engineering, Technical University of Munich, Munich, Germany and the Humboldt Research Fellow and the Senior Research
	Fellow with the Institute for Digital Communications, University Erlangen-Nn{\"u}berg, Germany. During his doctoral studies, he conducted cooperative research
	with the Department of Electrical Engineering, Missouri University of Science
	and Technology, USA. His research interests include massive MIMO/MIMO
	systems, massive access, physical layer security, and power line communication.

	Dr. Wu was awarded the IEEE Student Travel Grants for IEEE International
	Conference on Communications 2010, the Alexander von Humboldt Fellowship
	in 2014, the Travel Grants for IEEE Communication Theory Workshop 2016, the
	Excellent Doctoral Thesis Awards of China Communications Society 2016, the
	Exemplary Editor Award of IEEE Communications Letters 2017, and Young
	Elite Scientist Sponsorship Program by CAST 2017. He was an Exemplary
	Reviewer of the IEEE Transactions on Communications in 2015, 2016,
	and 2018, respectively. He was the lead Guest Editor for the Special Issue
	Physical Layer Security for 5G Wireless Networks of IEEE Journal on
	Selected Areas in Communications and the Guest Editor for the Special
	Issue Safeguarding 5G-and-Beyond Networks with Physical Layer Security
	of IEEE Wireless Communications. He is currently an Editor for the IEEE
	Transactions on Communications and IEEE Communications Letters.
	He has been a TPC member of various conferences, including Globecom, ICC,
	VTC, and PIMRC, etc.
\end{IEEEbiography}

\begin{IEEEbiography}[{\includegraphics[width=1in,height=1.25in,clip,keepaspectratio]{fig_bio/Zhenyu_Xiao.png}}]{Zhenyu Xiao} (Senior Member, IEEE)
received the B.E. degree from the Department of Electronics and Information Engineering, Huazhong University of Science and Technology, Wuhan, China, in 2006, and the Ph.D. degree from the Department of Electronic Engineering, Tsinghua University, Beijing, China, in 2011.

From 2011 to 2013, he held a postdoctoral position with the Department of Electronic Engineering, Tsinghua University. He was with the School of Electronic and Information Engineering, Beihang University, Beijing, as a Lecturer from 2013 to 2016, and an Associate Professor from 2016 to 2020, where he is currently a Full Professor.
He has visited the University of Delaware from 2012 to 2013, and the Imperial College London from 2015 to 2016. He has authored or coauthored over 70 papers, including IEEE Journal on Selected Areas in Communications, IEEE Transactions on Wireless
Communications, IEEE Transactions on Signal Processing, IEEE
Transactions on Vehicular Technology, IEEE Communications Letters, IEEE Wireless Communications Letters, and IET Communications. He has received the 2017 Best Reviewer Award of IEEE Transactions on Wireless Communications, the 2019 Exemplary
Reviewer Award of IEEE Wireless Communications Letters, and the 4th China Publishing Government Award. He has received the Second Prize of National Technological Invention, the First Prize of Technical Invention of China Society of Aeronautics and Astronautics, and the Second Prize of Natural Science of China Electronics Society. He is an active Researcher with broad interests on millimeter wave communications and UAV/satellite
communications and networking. He was elected as one of the 2020 Highly
Cited Chinese Researchers. He is currently an Associate Editor for IEEE Transactions on Cognitive Communications and Networking, China Communications, IET Communications, KSII Transactions on Internet and Information Systems, and Frontiers in Communications and Networks.
He has also been a Lead Guest Editor of a special issue named ``Antenna
Array Enabled Space/Air/Ground Communications and Networking'' of IEEE Journal on Selected Areas in Communications, one named ``Space-Air-Ground Integrated Network with Native Intelligence (NI-SAGIN): Concept, Architecture, Technology, and Radio'' of China Communications, and one named ``LEO Satellite Constellation Networks'' of Frontiers in
Communications and Networks. He has been a TPC Member of IEEE GLOBECOM, IEEE WCSP, IEEE ICC, and IEEE ICCC.	
\end{IEEEbiography}

\begin{IEEEbiography}[{\includegraphics[width=1in,height=1.25in,clip,keepaspectratio]{fig_bio/Symeon_Chatzinotas.png}}]{Symeon Chatzinotas}
	(Senior Member, IEEE)
	received the M.Eng. degree in telecommunications
	from the Aristotle University of Thessaloniki, Thessaloniki, Greece, in 2003, and the M.Sc. and Ph.D. degrees in electronic engineering from the
	University of Surrey, Surrey, U.K., in 2006 and
	2009, respectively. He is currently a Full Professor/a
	Chief Scientist and the Co-Head with the SIGCOM
	Research Group, SnT, University of Luxembourg.
	In the past, he has been a Visiting Professor at
	the University of Parma, Italy. He is coordinating
	the research activities on communications and networking, acting as a PI
	for more than 20 projects and main representative for 3GPP, ETSI, and
	DVB. He was involved in numerous research and development projects for
	NCSR Demokritos, CERTH Hellas, CCSR, and the University of Surrey.
	He has (co)authored more than 500 technical papers in refereed international
	journals, conferences, and scientific books. He was the co-recipient of the 2014
	IEEE Distinguished Contributions to Satellite Communications Award and the
	Best Paper Awards at EURASIP JWCN, CROWNCOM, and ICSSC. He is
	currently serving in the editorial board for the IEEE Transactions on Communications, IEEE Open Journal of Vehicular Technology, and the International Journal of Satellite Communications and Networking.
\end{IEEEbiography}

\begin{IEEEbiography}[{\includegraphics[width=1in,height=1.25in,clip,keepaspectratio]{fig_bio/Jinhong_Yuan.jpg}}]{Jinhong Yuan}
	(Fellow, IEEE) received the B.E. and
	Ph.D. degrees in electronics engineering from the
	Beijing Institute of Technology, Beijing, China, in
	1991 and 1997, respectively. From 1997 to 1999,
	he was a Research Fellow with the School of Electrical
	Engineering, The University of Sydney, Sydney,
	Australia. In 2000, he joined the School of Electrical
	Engineering and Telecommunications, University of
	New South Wales, Sydney, where he is currently
	a Professor and the Head of the Telecommunication
	Group, School of Electrical Engineering and
	Telecommunications. He has published two books, five book chapters, over
	300 articles in telecommunications journals and conference proceedings, and
	50 industrial reports. He is a co-inventor of one patent on MIMO systems and
	four patents on low-density-parity-check codes. His current research interests
	include error control coding and information theory, communication theory,
	and wireless communications. He has coauthored four best paper awards
	and one best poster award, including the Best Paper Award from the IEEE
	International Conference on Communications, Kansas City, USA, in 2018;
	the Best Paper Award from IEEE Wireless Communications and Networking
	Conference, Cancun, Mexico, in 2011; and the Best Paper Award from
	the IEEE International Symposium on Wireless Communications Systems,
	Trondheim, Norway, in 2007. He has served as the IEEE NSW Chapter
	Chair for the Joint Communications/Signal Processions/Ocean Engineering
	Chapter from 2011 to 2014. He has served as an Associate Editor for the
	IEEE Transactions on Communications from 2012 to 2017. He is
	serving as an Associate Editor for the IEEE Transactions on Wireless
	Communications and IEEE Transactions on Communications.
\end{IEEEbiography}

\begin{IEEEbiography}[{\includegraphics[width=1in,height=1.25in,clip,keepaspectratio]{fig_bio/ottersten.jpg}}]{Bj\"{o}rn Ottersten}
	(Fellow, IEEE) received the M.S.
	degree in electrical engineering and applied physics
	from Link\"{o}ping University, Link\"{o}ping, Sweden,
	in 1986, and the Ph.D. degree in electrical engineering
	from Stanford University, Stanford, CA,
	USA, in 1990. He has held research positions
	with the Department of Electrical Engineering,
	Link\"{o}ping University; the Information Systems Laboratory,
	Stanford University; the Katholieke Universiteit
	Leuven, Leuven, Belgium; and the University
	of Luxembourg, Luxembourg. From 1996 to 1997,
	he was the Director of Research with ArrayComm, Inc., a start-up in San Jose,
	CA, USA, based on his patented technology. In 1991, he was appointed as a
	Professor of signal processing with the Royal Institute of Technology (KTH),
	Stockholm, Sweden. He has been the Head with the Department for Signals,
	Sensors, and Systems, KTH, and the Dean with the School of Electrical
	Engineering, KTH. He is currently the Director for the Interdisciplinary Centre
	for Security, Reliability and Trust, University of Luxembourg. He is a fellow of
	EURASIP. He was a recipient of the IEEE Signal Processing Society Technical
	Achievement Award, the EURASIP Group Technical Achievement Award, and
	the European Research Council advanced research grant twice. He has coauthored
	journal papers that received the IEEE Signal Processing Society Best
	Paper Award in 1993, 2001, 2006, 2013, and 2019, and eight IEEE conference
	papers best paper awards. He has been a Board Member of IEEE Signal
	Processing Society and the Swedish Research Council. He currently serves for
	the boards of EURASIP and the Swedish Foundation for Strategic Research.
	He has served as the Editor-in-Chief of EURASIP Journal on Advances in
	Signal Processing and acted on the editorial boards of IEEE Transactions
	on Signal Processing, IEEE Signal Processing Magazine, IEEE Open
	Journal of Signal Processing, EURASIP Journal on Advances in Signal
	Processing, and Foundations and Trends in Signal Processing.
\end{IEEEbiography}

\end{document}